\documentclass[a4paper, 12pt]{article}

\usepackage[sort&compress]{natbib}
\bibpunct{(}{)}{;}{a}{}{,} 
\RequirePackage[colorlinks,citecolor=blue,urlcolor=blue]{hyperref}

\usepackage{amsthm, amsmath, amssymb, mathrsfs, multirow, url, subfigure}
\usepackage{graphicx} 
\usepackage{ifthen} 
\usepackage{amsfonts}
\usepackage[usenames]{color}
\usepackage{fullpage}
\usepackage{tikz}
\usepackage{float}
\usepackage{bbm}
\usepackage{booktabs}
\usepackage{soul}
\allowdisplaybreaks

\theoremstyle{plain} 
\newtheorem{thm}{Theorem}
\newtheorem{cor}{Corollary}

\newtheorem{lem}{Lemma}

\theoremstyle{definition}

\theoremstyle{remark}

\newcommand{\prob}{\mathsf{P}}

\newcommand{\unif}{{\sf Unif}}
\newcommand{\nm}{{\sf N}}

\newcommand{\geom}{{\sf Geom}}
\newcommand{\Tulap}{{\sf Tulap}}

\newcommand{\ZZ}{\mathbb{Z}}

\newcommand{\TT}{\mathbb{T}}

\newcommand{\iid}{\overset{\text{\tiny iid}}{\,\sim\,}}

\newcommand{\lPi}{\underline{\Pi}}
\newcommand{\uPi}{\overline{\Pi}}

\begin{document}

\title{\bf 
Prior- and likelihood-free probabilistic inference with finite-sample calibration guarantees
}
\author{Leonardo Cella\textsuperscript{$\star$} and Emily C. Hector\textsuperscript{$\dagger$}\hspace{.2cm}\\
\textsuperscript{$\star$}Department of Statistical Sciences, Wake Forest University\\
\textsuperscript{$\dagger$}Department of Biostatistics, University of Michigan}
\date{}
\maketitle
 
%\bigskip
\begin{abstract} %max 200 words
Motivated by parametric models for which the likelihood is analytically unavailable, numerically unstable, or prohibitively expensive to compute or optimize, we develop a prior- and likelihood-free framework for fully probabilistic (Bayesian-like) uncertainty quantification with finite-sample calibration guarantees. Our method, a type of inferential model, produces data-dependent degrees of belief about claims concerning the unknown parameter while controlling the frequency with which high belief is assigned to false claims, even in finite-sample settings. Our procedure is general in that it requires only the ability to simulate from the model. We first rank candidate parameter values according to how well data simulated from the model agree with the observed data, and then rescale these rankings in a way that yields the desired finite-sample calibration guarantees. The key idea is to employ a permutation-invariant function, such as a depth function, to rank parameter values. We show that such a choice yields closed-form calibration rescaling calculations, making the procedure computationally simple. We illustrate our method's broad appeal with four examples, including differential privacy and Ising models. An analysis of the spatial configuration of 2025 measles outbreaks in the U.S. showcases our method's practical advantages.
\end{abstract}

\noindent%
{\it Keywords:} Inferential models, exchangeability, Bayesian inference, simulation-based inference, finite-sample validity. %3-5 keywords

\section{Introduction}
\label{s:intro}

Modern statistical practice increasingly relies on complex models for which the likelihood is analytically unavailable, numerically unstable, or prohibitively expensive to compute or optimize. Simulation-based approaches, such as approximate Bayesian computation (ABC) \citep{beaumont2002approximate, fearnhead2012constructing, sisson2018overview}, neural Bayes \citep{lenzi2023neural, hector2024whole, zammit2024neural, sainsbury2024likelihood} and repro-sample frameworks \citep{xie2022repro,Awan03012025} bypass direct likelihood evaluation by relying on model-based simulation to compare observed and synthetic data. Approximation-based methods, including pseudo-likelihood and composite likelihood \citep{VarinReidFirthCompositeOverview, lindsay1988composite,besag1974}, replace the intractable likelihood with a tractable surrogate objective. Synthetic likelihood methods \citep{wood2010statistical,Price02012018} lie at the intersection of these two categories, using model-based simulation to construct parametric approximations to the likelihood of informative summaries. These methodological approaches cut across the traditional Bayesian–frequentist divide. Frequentist procedures provide inference through confidence sets and hypothesis tests with at least asymptotic error-rate control, and in some cases finite-sample guarantees, as in the repro-sample framework of \citet{Awan03012025}. Bayesian procedures, by contrast, yield fully probabilistic uncertainty quantification, assigning posterior probabilities—interpreted as degrees of belief—to claims about the unknown parameter, but come with no frequentist calibration guarantees, e.g., that the posterior probabilities assigned to false claims don't tend to be too large. To unify these two methodological strands, we develop a prior- and likelihood-free framework for fully probabilistic (Bayesian-like), finite-sample calibrated uncertainty quantification. 

Generally speaking, the search for a unification of frequentism and Bayesianism—what Efron calls ``the most important unresolved problem in statistical inference'' \citep{Efron2013}—has been the goal of several alternative frameworks, e.g., Fisher's fiducial approach \citep{fisherfiducial}, generalized fiducial inference \citep{MainHaning}, and confidence distributions \citep{mainconfdist}. The aim of this unification is to produce data-dependent degrees of belief, retaining the probabilistic nature of Bayesian inference, while ensuring that these beliefs are prior-free and calibrated, consistent with the frequentist view. The {\em false confidence theorem} \citep{Ryansatellite}, however, shows that belief assignments based on data-dependent probabilities lack calibration guarantees in that they are at risk of being systematically high for false claims. Consequently, calibrated belief assignments must take the form of suitable {\em non-additive} or {\em imprecise probabilities}. Among existing proposals—such as Dempster–Shafer theory \citep{dempster.copss,dempster1967,dempster1968a,shafer1976} and belief functions \citep{denoeux.li.2018,denoeux2014}—Inferential Models (IMs) \citep{martinbook,imreview} are the only framework that provides a construction guaranteeing belief calibration, and are therefore a serious candidate for addressing Efron’s challenge.

IMs are built on a simple two-step construction, reviewed in Section~\ref{s:IMs}: ranking and validification. This construction is quite general and can be applied in parametric, semiparametric, nonparametric, and prediction settings \citep[see][]{imreview,CellaBoots,CELLA2024IJAR,cella2025,CellaMartinSevere}. In the ranking step, a meaningful measure of compatibility between the observed data and the unknown quantity of interest is chosen; in the validification step, it is rescaled so that it leads to the type of calibration discussed above. Our focus here is on parametric settings, where the standard approach uses the likelihood in the ranking step. Besides being naturally principled, this construction inherits the efficiency typically associated with likelihood-based methods. This approach is not feasible, however, when the likelihood is unavailable or computationally intractable, particularly because validification often requires repeated likelihood evaluations.

A natural route to likelihood-free IMs is to replace the likelihood-based ranking with an alternative measure of data–parameter compatibility. While problem-specific alternatives have been proposed—for example, Gaussian ``working likelihoods'' in divide-and-conquer settings \citep{HectorCellaMartin2025}—we develop a general, principled, and computationally simple likelihood-free IM construction that retains the ranking–validification blueprint of IMs but implements the ranking step in a fundamentally different, simulation-based spirit. Our work is related to the distribution-free, prediction-focused IM of \citet{CellaMartin2021,CELLAMARTINREGPREDIJAR}, where exchangeability-preserving functions play a central role in achieving calibration through reasoning analogous to conformal prediction \citep{Vovk:2005}. In the present setting, exchangeability-preserving functions remain fundamental, but are now employed for parametric inference rather than distribution-free prediction. Our work is also closely connected to the ideas of \citet{Awan03012025}, but we move beyond their focus on procedures with guaranteed error-rate control, although such procedures can be derived from our approach. More specifically, our proposed construction is:
\begin{itemize}\setlength\itemsep{0em}
\item {\em general}, in that it applies whenever simulation from the assumed model is fast;
\item {\em principled}, in that parameter values are deemed plausible when they generate synthetic data that resemble the observed data, which can be interpreted as evidence that these values are supported by the data; and
\item {\em computationally simple}, combining depth-based ranking with closed-form validification.
\end{itemize}
Details of the proposed construction are presented in Section~\ref{s:lfIMs}, together with its finite-sample calibration guarantees. Finite-sample calibration is a central feature of the IM framework. Beyond achieving this property, the proposed construction introduces an important novelty: the resulting IM explicitly accounts for uncertainty arising from the simulated data used in the ranking step. Illustrations of the proposed method using synthetic data are presented in Section~\ref{s:examples}. A real-data application analyzing spatial measles incidence is given in Section~\ref{s:real}.

\section{Likelihood-based IMs: construction, limitations, and desiderata}

\subsection{Problem setup and motivation}
\label{s:IMs}

Let $Z \in \ZZ$ denote a random quantity of interest that is modeled by a probability distribution $\prob_\theta$ supported on $\ZZ$ and indexed by a finite-dimensional parameter $\theta \in \TT$. We assume there is a ``true'' parameter, denoted by $\Theta$, so that $Z\sim\prob_\Theta$. The goal is to quantify uncertainty about $\Theta$ after observing data $Z^n=z^n$, where $Z^n = Z_1, \ldots, Z_n$ consists of $n$ iid realizations from $\prob_\Theta$. We denote by $\prob_{\theta}^n$ the joint distribution of $Z^n$ under the model $Z_i \sim \prob_\theta$, $i=1, \ldots, n$.

The Bayesian–frequentist unification mentioned in Section~\ref{s:intro} seeks to produce prior-free, data-driven belief assignments about $\Theta$ with frequentist calibration guarantees. Specifically, if $bel_{z^n}(C)$ denotes the degree of belief assigned to the claim $\Theta \in C$ after observing data $z^n$, calibration requires that
\begin{equation}\label{eq:NeceVal}
\sup_{\theta \notin C}\prob_\theta^n \left\{ bel_{Z^n}(C) \geq 1-\alpha \right \} \leq \alpha, \quad  \text{for all $\alpha \in [0,1]$ and all $C \subseteq \TT$},
\end{equation}
so that high belief is rarely assigned to false claims. The false confidence theorem \citep{Ryansatellite} shows that beliefs modeled by additive probabilities cannot, in general, satisfy this requirement, implying the need for non-additive representations. However, non-additivity alone does not ensure \eqref{eq:NeceVal}; a special construction is required and, to our knowledge, IMs provide the only framework that guarantees it.

Let $\lPi_{z^n}$ represent the data-dependent IM degree of belief (or lower probability). Under non-additivity, there is a corresponding degree of plausibility (or upper probability), $\uPi_{z^n}$, such that $\lPi_{z^n} \le \uPi_{z^n}$, and the two are linked by the relation $\lPi_{z^n}(C) = 1 - \uPi_{z^n}(C^c)$. For IMs, they take the form of possibility and necessity measures \citep[e.g.,][]{dubois.prade.book, hose2021universal}, the simplest/more structured non-additive probability model. This implies the existence of a {\em possibility contour}, a function $\pi_{z^n}(\theta)$ on $\mathbb{Z}^{n} \times \TT \rightarrow [0,1]$ satisfying 
$\sup_{\theta \in \TT} \pi_{z^n}(\theta)=1$ for all $z^n$, from which the degree of plausibility is computed via
$\uPi_{z^n}(C) = \sup_{\theta \in C} \pi_{z^n}(\theta)$.

The IM possibility contour satisfies a key property that underlies all of its desirable features. Specifically, when viewed as a random variable, the IM contour is stochastically no smaller than a $\unif(0,1)$ random variable. This is known as the IMs' {\em validity property}, and in the present context, it takes the form
\begin{equation}\label{eq:IMvalidity}
\prob^n_{\Theta}\{ \pi_{Z^n}(\Theta) \leq \alpha\} \leq \alpha, \quad \text{for all $\alpha \in [0,1]$}.
\end{equation}
The likelihood-free IM proposed in Section~\ref{s:lfIMs} is designed for the same parametric context considered here, but its validity is achieved by incorporating additional random elements beyond the observed data $Z^n$ arising from the simulated data.

Before describing an IMs construction that guarantees \eqref{eq:IMvalidity}, we summarize in the following corollary the key features of IMs that follow from this property.

\begin{cor}\citep{CellaMartinMainSeverity}\label{cor:Main}
    For an IM whose contour satisfies the validity property in \eqref{eq:IMvalidity}, the following is true for all $\alpha \in [0,1]$: 
    \begin{enumerate}
        \item The degrees of belief and plausibility for all claims $C 
        \subseteq \mathbb{T}$ are calibrated, i.e.,
        \begin{equation}\label{eq:valNecPos}
            \sup_{\theta \notin C}\prob_\theta^n \left\{ \lPi_{Z^n}(C) \geq 1-\alpha \right \} \leq \alpha \quad \text{and} \quad \sup_{\theta \in C}\prob_\theta^n \left\{ \uPi_{Z^n}(C) \leq \alpha \right \} \leq \alpha.
        \end{equation}
        \item The $\alpha$ level sets of the IM's contour, 
$C_\alpha(z^n) = \{\theta \in \mathbb{T}: \pi_{z^{n}}(\theta) > \alpha\}$,
are nominal set estimates for $\Theta$ in the sense that $\prob^n_\Theta\bigl\{ C_\alpha(Z^{n}) \not\ni \Theta \bigr\} \leq \alpha.$
        \item The degrees of belief and plausibility are uniformly calibrated. For example, the degrees of plausibility satisfy
\begin{equation}\label{eq:uniform.validity}
\sup_{\theta \in C}\prob^n_\theta\bigl\{ \uPi_{Z^{n}}(C) \leq \alpha \text{ for some $C$ with $C \ni \theta$} \bigr\} \leq \alpha.    
\end{equation}        
\end{enumerate}
\end{cor}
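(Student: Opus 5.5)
All three parts will follow from one monotonicity observation: every event involved implies the event $\{\pi_{Z^n}(\theta) \le \alpha\}$ for the relevant true $\theta$. Its probability is then at most $\alpha$ by the validity property \eqref{eq:IMvalidity}. Since \eqref{eq:IMvalidity} is stated for the true $\Theta$ but $\Theta$ is arbitrary, I will read it as holding for every $\theta \in \TT$ under $\prob_\theta^n$. That is, $\prob_\theta^n\{\pi_{Z^n}(\theta)\le\alpha\}\le\alpha$ for all $\theta$ and $\alpha\in[0,1]$.

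\textbf{Part 1.} I start with plausibility. Fix $C$, $\theta\in C$ and $\alpha$. Because $\uPi_{z^n}(C)=\sup_{\vartheta\in C}\pi_{z^n}(\vartheta)\ge \pi_{z^n}(\theta)$, we have
\[
\{\uPi_{Z^n}(C)\le\alpha\}\subseteq\{\pi_{Z^n}(\theta)\le\alpha\},
\]
so the probability is at most $\alpha$. Taking the supremum over $\theta\in C$ gives the second inequality in \eqref{eq:valNecPos}. For belief, I use the duality $\lPi_{z^n}(C)=1-\uPi_{z^n}(C^c)$. Then $\{\lPi_{Z^n}(C)\ge 1-\alpha\}=\{\uPi_{Z^n}(C^c)\le\alpha\}$, and for $\theta\notin C$, i.e.\ $\theta\in C^c$, the plausibility result applied to the claim $C^c$ gives the first inequality.

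\textbf{Part 2.} By definition, $\Theta\notin C_\alpha(z^n)$ exactly when $\pi_{z^n}(\Theta)\le\alpha$. Hence $\prob_\Theta^n\{C_\alpha(Z^n)\not\ni\Theta\}=\prob_\Theta^n\{\pi_{Z^n}(\Theta)\le\alpha\}\le\alpha$ directly by \eqref{eq:IMvalidity}.

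\textbf{Part 3.} This is the only step needing care, because the event involves a union over uncountably many claims $C$, so a union bound is unavailable. The same monotonicity resolves it. Fix $\theta$. If some $C\ni\theta$ has $\uPi_{Z^n}(C)\le\alpha$, then $\pi_{Z^n}(\theta)\le \uPi_{Z^n}(C)\le\alpha$. Conversely, the singleton $C=\{\theta\}$ shows the union event equals $\{\pi_{Z^n}(\theta)\le\alpha\}$. The union event is therefore a single event whose probability is bounded by \eqref{eq:IMvalidity}, uniformly in $\theta$. (Measurability of the union event is also settled by this identification.) The analogous uniform statement for belief follows by the same duality as in Part 1.
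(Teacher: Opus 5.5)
Your proof is correct and follows the same route the paper takes. The paper only cites Corollary~\ref{cor:Main}, but it proves the likelihood-free analogue, Corollary~\ref{cor:Main_LF}, with exactly your three observations, each then bounded by \eqref{eq:IMvalidity}: monotonicity $\uPi_{Z^n}(C)\ge\pi_{Z^n}(\theta)$ for $C\ni\theta$ (plus duality for belief), the equivalence $\Theta\notin C_\alpha(Z^n) \iff \pi_{Z^n}(\Theta)\le\alpha$, and the identification of the ``for some $C\ni\theta$'' event with $\{\pi_{Z^n}(\theta)\le\alpha\}$.
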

As expected, the first inequality of \eqref{eq:valNecPos} confirms that the IMs degrees of belief satisfy the calibration condition in \eqref{eq:NeceVal}. The second inequality of \eqref{eq:valNecPos}, in turn, establishes that the IMs degrees of plausibility are also calibrated, in the sense that assigning low plausibility to a true claim about $\Theta$ is a rare event under $Z^n \sim \prob^n_{\Theta}$. This implies that the IMs degrees of plausibility can be used for hypothesis testing with guaranteed error rate control. Relatedly, property 2 shows that the IMs approach can also output calibrated set estimates. Property 3 strengthens IM calibration. The event ``for some $C$ with $C \ni \theta$'' corresponds to a union over all such sets, making it broader than any fixed claim. Hence the bound in \eqref{eq:uniform.validity} is stronger than that in \eqref{eq:valNecPos}, ensuring error control even when the analyst selects claims after observing the data. See \citet{CellaMartinMainSeverity} for further discussion.

There are situations in which interest focuses on a feature $\Phi = g(\Theta)$, where $g$ is a function defined on $\TT$. A marginal IM for $\Phi$ can be obtained from that for $\Theta$ through
\[\pi_{z^n}^g(\phi) = \sup_{\theta: g(\theta) = \phi}\pi_{z^n}(\theta), \quad \phi \in g(\TT).\]
Another consequence of the validity property in \eqref{eq:IMvalidity} is that this marginal IM inherits the desirable properties of the original IM construction for $\Theta$ stated in Corollary~\ref{cor:Main}.

We describe now what has been the standard construction of a data-dependent possibility contour that satisfies \eqref{eq:IMvalidity} in the present context where a parametric statistical model is assumed for the data. Let $L_{z^n}(\theta)$ denote the likelihood function based on the data $z^n$, and let $\hat{\theta}_{z^n}$ be the corresponding maximum likelihood estimate (MLE). Let $\rho(z^n,\theta)$ be the likelihood ratio evaluated at some particular value $\theta$ of $\Theta$, i.e., $\rho(z^n,\theta)= L_{z^n}(\theta) / L_{z^n}(\hat{\theta}_{z^n})$, which acts as the ranking function that measures the compatibility between $z^n$ and $\theta$. The relative likelihood is itself a data-dependent possibility contour, and inferences based on it have been extensively studied \citep[e.g.,][]{shafer1982, wasserman1990b, denoeux2006, denoeux2014}. There is no guarantee, however, that $\rho(Z^n,\Theta)$ satisfies the validity property in \eqref{eq:IMvalidity}. For this reason, the following validification is performed:
   \begin{equation}\label{eq:validification}
        \pi_{z^n}(\theta) = \prob^n_{\theta} \{ \rho(Z^n,\theta) \leq \rho(z^n,\theta)\}. 
    \end{equation}
This step transforms $\rho(z^n,\theta)$ into a possibility contour that is guaranteed to satisfy \eqref{eq:IMvalidity}. 

In summary, the {\em likelihood-based} IMs construction above requires the ability to compute the likelihood ratio and to evaluate its sampling distribution. While this distribution is rarely available in closed form, the assumption that $Z^n \iid \prob_\Theta$ allows for its approximation via Monte Carlo at each possible value of $\Theta$ on a sufficiently fine grid \citep[e.g.,][]{ryanpp2, hose.hanss.martin.belief2022}.  That is, the possibility contour in \eqref{eq:validification} is approximated as 
\begin{equation}\label{eq:StandIMsMC}
\pi_{z^n}(\theta) \approx \frac1L \sum_{\ell=1}^L \mathbbm{1} \{ \rho(Z_{\ell,\theta}^n,\theta) \leq \rho(z^n,\theta) \}, \quad \theta \in \TT,   
\end{equation}
where $\mathbbm{1}(\cdot)$ is the indicator function and $Z_{\ell,\theta}^n$ consists of $n$ iid samples from $\prob_\theta$, $\ell=1,\ldots,L$. 

The likelihood-based IM construction is powerful, as demonstrated by the properties summarized in Corollary~\ref{cor:Main}. However, its reliance on the likelihood function and the computation of MLEs can make it infeasible when the likelihood is unavailable or computationally intractable, or the MLE is difficult to obtain. This problem is compounded because validification typically uses the Monte Carlo approximation in \eqref{eq:StandIMsMC} and thus requires many likelihood-ratio evaluations. One further limitation of the Monte Carlo approximation is that the finite-sample validity guarantees of the IM become approximate whenever \eqref{eq:StandIMsMC} is used due to Monte Carlo error, and only hold exactly as $L \to \infty$.

\subsection{Key insights towards likelihood-free IMs} \label{ss:lfIMs:motivation}

Luckily, it turns out that the properties in Corollary~\ref{cor:Main} are not tied to the use of the likelihood ratio in the IMs construction. The general framework proposed in \citet{ryanpp1, imreview} shows that IMs are built through two key steps: the {\em ranking step}, where a real-valued function $\rho: \mathbb{Z}^n \times \TT \to \mathbb{R}$ that measures the compatibility between candidate values of $\Theta$ and the observed data $z^n$ (larger values indicating greater compatibility) is chosen; and the {\em validification step}, where this ranking function is rescaled via \eqref{eq:validification} to produce a possibility contour that satisfies \eqref{eq:IMvalidity}.

The crucial point is that the ranking function $\rho$ need not be the likelihood ratio; any real-valued measure of data--parameter agreement can serve in principle. The choice of the likelihood ratio in the IM ranking step has become standard because it tends to order parameter values sensibly across many problems. Moreover, it has recently been shown to be asymptotically efficient \citep{martin2025asymptotic}. That flexibility, however, comes with two demands: first, the chosen function must meaningfully capture data--parameter agreement; second, it must be computationally tractable. The efficiency of the resulting inferences hinges on the first of these, which underscores the appeal of the likelihood ratio as a default choice. Yet meaningfulness alone is not enough: the function must also be feasible to compute. After all, this is precisely why we consider alternatives to the likelihood ratio. 

It is hard to imagine a principled and computationally tractable ranking function $\rho$ that is general enough to replace the likelihood ratio in every—or even most—cases. More realistically, the choice of such a function will depend on the context of the problem at hand. For example, \citet{HectorCellaMartin2025} use a Gaussian ``working likelihood'' in the ranking step for non-Gaussian data to construct an IM within a divide-and-conquer framework. In what follows, we propose a general strategy for a likelihood-free IM construction. This is made possible by employing a fundamentally different yet intuitively sound approach to ranking candidate parameter values based on the observed data—one that does not take the form of $\rho$ but is instead based on data simulated from the model. We show that computational feasibility follows naturally from this alternative ranking. Importantly, this new construction still follows the core IM blueprint of ranking followed by validification, and our only assumption in the remainder of this paper is that it is feasible to generate samples from $\prob_\theta$.

\section{Likelihood-free IMs}
\label{s:lfIMs}

\subsection{Construction} \label{ss:lfIMs:construction}

The overall idea of this new likelihood-free construction is as follows: start with a low-dimensional summary statistic of $Z^n$, which is not necessarily an estimator of the parameter $\Theta$. Let $s_{\text{obs}}$ denote its observed value, based on data $z^n$. Because it is easy to simulate from the assumed model, we can generate, for any given candidate value $\theta$ of $\Theta$, a set of statistics 
$s^M(\theta) = \{ s_1(\theta), \ldots, s_M(\theta)\}$ by simulating $M$ data sets from $\prob_\theta^n$. The task is then to identify values of $\theta$ that produce simulated statistics resembling $s_{\text{obs}}$. To this end, we define a ranking function that, analogous to $\rho$ in the standard IM construction in Section \ref{s:IMs}, assigns higher values when the elements of $s^M(\theta)$ are close to $s_{\text{obs}}$ and lower values otherwise. This constitutes the ranking step. The transformation of this ranking function into a valid possibility contour through validification faces a new challenge, however, since Monte Carlo samples are now used in the ranking step—unlike in the standard IMs formulation, where such samples may be used to assist validification itself when the sampling distribution of $\rho$ is unknown; see \eqref{eq:StandIMsMC}. We show that validification in the new construction has a closed form if the ranking function is chosen to satisfy a key property, namely {\em permutation invariance}. We formalize these points next.

\subsubsection{Ranking step}
Let $d$ be the dimension of $s_{\text{obs}}$. Consider $\Psi_\theta: \mathbb{R}^{d M} \times \mathbb{R}^d \to \mathbb{R}$, a mapping that takes as input the collection of $M$ generated statistics $s^M(\theta)$ and the observed statistic $s_{\text{obs}}$. For notational simplicity, we henceforth write
\begin{align*}
\Psi_\theta \{ s^M(\theta),s_{\text{obs}} \} = \Psi_\theta ( s^M,s_{\text{obs}} ).
\end{align*}
We assume that $\Psi_\theta$ behaves like a \emph{conformity measure} in conformal prediction \citep{Vovk:2005}, in the sense that larger values indicate greater compatibility between its two arguments. In this way,
the essence of the IM ranking step is captured, since $\Psi_\theta (s^M,s_{\text{obs}} )$ measures the compatibility between the observed data and a candidate value $\theta$ of the unknown parameter $\Theta$. But unlike $\rho(z^n,\theta)$ in the standard IM construction, which directly compares them, $\Psi_\theta (s^M,s_{\text{obs}} )$ does so indirectly, by comparing the observed data to synthetic data generated under the assumption $\Theta = \theta$.

With an eye on validification, the only additional constraint we impose is that $\Psi_\theta$ is permutation-invariant in its first argument; that is, it is symmetric in the elements of $s^M(\theta)$, so that they may be arbitrarily permuted without affecting the value of $\Psi_\theta$. A simple example of such a permutation-invariant function suitable for univariate problems is 
\[\Psi_\theta (s^M,s_{\text{obs}} ) = - \left|s_{\text{obs}} - \bar{s}_\theta\right|,\]
where $\bar{s}_\theta$ is the average of $s^M(\theta)$. 
More generally, most statistical depth functions are permutation-invariant and assign larger values to more central summaries. In the examples of Section~\ref{s:examples}, we particularly use the Mahalanobis depth
\[\Psi_\theta\left(s^M,s_{\text{obs}}\right)= \left[1 + \left(s_{\text{obs}} - \bar{s}_\theta\right)^\top \widehat{\Sigma}_\theta^{-1} \left(s_{\text{obs}} - \bar{s}_\theta\right) \right]^{-1}, \]
where $\widehat{\Sigma}_\theta$ is the sample covariance matrix of $s^M(\theta)$, and the Tukey (halfspace) depth, which measures how central $s_{\text{obs}}$ is within the cloud of simulated summaries $s^M(\theta)$, defined as the smallest fraction of simulated summaries contained in any halfspace that includes $s_{\text{obs}}$. In practice, it can be computed using standard exact or approximate halfspace-depth algorithms; see, e.g., \citet{DyckerhoffMozharovskyi2016} and the references therein.

In summary, the proposed ranking function is both principled and computationally attractive. It assigns greater plausibility to parameter values that reproduce the observed summaries, while relying solely on simulation and simple depth evaluations. No likelihood computation or optimization is required.

\subsubsection{Validification step}

Let $s_{\text{obs}} = s_{M+1}$ and write $s^{M+1}(\theta) = \{ s^M(\theta), s_{M+1} \}$. For a fixed $\theta$, let $S_1(\theta), \ldots, S_{M+1}(\theta)$ denote random summary statistics computed from $M+1$ independent datasets generated under $\prob^n_\theta$, and $S^M(\theta) = \{ S_1(\theta), \ldots, S_M(\theta)\}$. For notational simplicity, we write $\Psi_\theta \{ S^M(\theta), S_{M+1} (\theta) \} = \Psi_\theta ( S^M, S_{M+1} )$. Having specified the ranking function $\Psi_\theta$, its validified counterpart would naturally be obtained by
\begin{equation}\label{eq:LFValidification}
\delta_{s^{M+1}}(\theta) := \delta_{s^{M+1}(\theta)}(\theta) = \prob_\theta^{n \times (M+1)}\bigl\{ \Psi_\theta ( S^M , S_{M+1}) \leq \Psi_\theta(s^M,s_{M+1}) \bigr\}.
\end{equation}
To avoid confusion, we emphasize that $S_{M+1}(\theta)$ is a random variable generated under $\prob_\theta^n$, whereas $s_{M+1}$ is the fixed observed summary and does not depend on $\theta$. The direct evaluation of the above probability is generally infeasible, since the distribution of $\Psi_\theta (S^M ,S_{M+1} )$ is almost surely unknown. This, however, is not unfamiliar territory within the IM framework. As discussed in Section~\ref{s:IMs}, the validification of the chosen ranking function can be approximated via Monte Carlo as in \eqref{eq:StandIMsMC}. 

The present setting is somewhat particular because $\Psi_\theta$ itself is already based on Monte Carlo samples. In principle, nothing prevents introducing an additional layer of $L$ Monte Carlo samples for its validification. There is a computational difficulty, however: for each $\theta \in \TT$, this strategy would require $M \times L$ Monte Carlo evaluations, and the overall cost escalates rapidly even for low-dimensional $\Theta$. One main motivation behind a likelihood-free IM is precisely to alleviate computational burdens arising from an intractable likelihood function, and our choice of ranking function achieves this goal. Replacing that burden with another would defeat the purpose. Further, validity is only guaranteed for $L \to \infty$. Therefore, a new strategy is needed to carry out the validification in \eqref{eq:LFValidification}.

IM validification in problems involving nuisance parameters has long been recognized as computationally challenging; see \citet{Ryanpp3,imreview}. A technique proposed to address this issue is \emph{conditioning}, in which the probability in the validification step is conditioned on a suitable function of the observed data that effectively eliminates nuisance parameters from the validification calculation, thereby significantly reducing computational cost. Although the context is different here—the full $\Theta$ is of interest—this strategy can still be adapted to render the probability calculation in \eqref{eq:LFValidification} tractable, thanks to the permutation-invariant nature of $\Psi_\theta$.

The joint distribution of $\{S_1 (\theta), \ldots, S_{M+1} (\theta) \}$ is exchangeable since each summary statistic is computed from an independent dataset generated under $\prob^n_\theta$. This structure allows us to condition on the unordered collection $\{s_1, \ldots, s_M, s_{M+1}\}$. More specifically, the proposed validified ranking—based on conditioning—is given by
\begin{align*}
\delta_{s^{M+1}}(\theta) &= \prob_\theta^{n \times (M+1)}\left[\Psi_\theta(S^M,S_{M+1}) \leq \Psi_\theta(s^M,s_{M+1}) \, | \, \{s_1, \ldots, s_M, s_{M+1}\} \right] \\
&= \frac{1}{(M+1)!}\sum_\sigma \mathbbm{1}\{\Psi_\theta(s^{\sigma(1:M)},s_{\sigma(M+1)}) \leq \Psi_\theta(s^M,s_{M+1})\}, 
\end{align*}
where the sum is over all $(M + 1)!$ permutations, $\sigma$, of the integers $1, \ldots, M, M + 1$. Because $\Psi_\theta$ is permutation-invariant in its first argument, the value of $\Psi_\theta(s^{\sigma(1:M)},s_{\sigma(M+1)})$ depends only on which element is singled out as the observed statistic. Consequently, the right-hand side above can be further simplified:
\begin{equation}\label{eq:validified}
\delta_{s^{M+1}}(\theta)= \frac{1}{M+1}\sum_{i=1}^{M+1}\mathbbm{1}\{\Psi_\theta(s^{M+1}_{-i},s_i) \leq \Psi_\theta(s^M,s_{M+1})\},
\end{equation}
where $s^{M+1}_{-i}$ denotes $s^{M+1}(\theta)$ with the $i$th element removed. Validification in the proposed likelihood-free IM construction therefore reduces to averaging over the $M+1$ possible choices of which statistic plays the role of the observed one. This simple, closed-form counting solution achieves the computational simplicity sought, keeping the proposed IM efficient not only in the ranking step but throughout.

The reader may have noticed that we have not referred to the validified ranking in \eqref{eq:validified} as a contour. This is because there is no guarantee that $\sup_{\theta \in \TT} \delta_{s^{M+1}}(\theta)=1$ for all $s_{\text{obs}}$, implying that an additional normalization step may be necessary in the proposed construction. The likelihood-based construction in Section~\ref{s:IMs} does not require such a step because it is guaranteed that $\pi_{z^n}(\hat\theta_{z^n})=1$, where $\hat\theta_{z^n}$ denotes the MLE. For the depth-type rankings considered in our likelihood-free IM construction, \eqref{eq:validified} reaches the value one if, for some $\theta \in \TT$, the observed summary $s_{\text{obs}}$ is the most ``central'' element among the $M+1$ summaries in the leave-one-out comparisons. For example, under the Mahalanobis depth this occurs when $s_{\text{obs}}$ lies closest to the center of the simulated summaries relative to their empirical covariance structure. Under the Tukey depth, it occurs when $s_{\text{obs}}$ lies deepest within the cloud of simulated summaries, in the sense that any halfspace containing $s_{\text{obs}}$ also contains a large fraction of the simulated summaries.
Although these are very plausible for some $\theta \in \TT$, it is not guaranteed. Therefore, the validified likelihood-free IM contour is formally defined as
\begin{equation}\label{eq:valnorm}
 \pi_{s^{M+1}}(\theta) = \frac{\delta_{s^{M+1}}(\theta)}{\sup_{\vartheta \in \TT} \delta_{s^{M+1}}(\vartheta)}, \quad \theta \in \TT.    
\end{equation}

This additional normalization step is not unfamiliar in the IM framework; see, for example, Section~7 of \citet{CELLAMARTINREGPREDIJAR} and \citet{CellaFusion}. However, unlike those cases, we emphasize that in the present context normalization rarely needs to be applied, as it is typically the case that $\sup_{\theta \in \TT} \delta_{s^{M+1}}(\theta)=1$, so that $\pi_{s^{M+1}}(\theta)=\delta_{s^{M+1}}(\theta)$. In fact, in all examples and simulations considered in Sections~\ref{s:examples} and~\ref{s:real}, normalization was never required.

\subsection{Finite-sample guarantees}
The contour in \eqref{eq:valnorm} constitutes the basic output of the proposed likelihood-free IM. Degrees of belief and plausibility for any claim of interest $C \subseteq \TT$ are obtained from this contour via the usual possibility calculus:
\[\uPi_{s^{M+1}}(C) = \sup_{\theta \in C} \pi_{s^{M+1}}(\theta) \quad \text{and} \quad \lPi_{s^{M+1}}(C) = 1 - \uPi_{s^{M+1}}(C^c).\]
These probabilistic assignments are only meaningful, however, if \eqref{eq:valnorm} is properly calibrated. Recall from Section~\ref{s:IMs} that IM validity requires the contour, viewed as a function of the problem’s random elements, to be stochastically no smaller than a \unif(0,1) random variable. Below, we establish this property for the proposed likelihood-free IM, beginning with an auxiliary result. To this end, define $T_i(\theta) = \Psi_\theta(S^{M+1}_{-i},S_i)$, $i = 1, \ldots, M+1$ and $\operatorname{rank}\{T_{M+1}(\theta)\} = \sum_{i=1}^{M+1} \mathbbm{1}\{T_i(\theta) \leq T_{M+1}(\theta)\}$.

\begin{lem}\label{lemma:1} 
For all $k=1,\ldots,M+1$,
\[\prob_\Theta^{n \times (M+1)}\{\operatorname{rank}\{T_{M+1}(\Theta)\} \le k\} \le k/(M+1),\]
i.e., $\operatorname{rank}\{T_{M+1}(\Theta)\}$ is stochastically no smaller than $\unif\{1, \ldots, M +1\}$.
\end{lem}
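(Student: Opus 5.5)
The plan is the standard conformal-type exchangeability argument. Under $\Theta$, the observed summary $S_{M+1}(\Theta)$ and the $M$ simulated summaries $S_1(\Theta),\ldots,S_M(\Theta)$ are computed from $M+1$ independent datasets drawn from $\prob^n_\Theta$. They are therefore iid, and in particular exchangeable. The first step is to show that the vector $(T_1(\Theta),\ldots,T_{M+1}(\Theta))$ is exchangeable. Permutation invariance of $\Psi_\Theta$ in its first argument gives $T_i = \Psi_\Theta(S^{M+1}_{-i},S_i) = G(S_i;\{S_1,\ldots,S_{M+1}\})$ for a fixed map $G$ that depends on the unordered multiset. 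Permuting the $S_j$ by $\sigma$ therefore just permutes the $T_i$ by $\sigma$. Combined with exchangeability of the $S_j$, this gives $(T_{\sigma(1)},\ldots,T_{\sigma(M+1)}) \overset{d}{=} (T_1,\ldots,T_{M+1})$ for every permutation $\sigma$.

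The second step uses exchangeability to show that $R_j := \operatorname{rank}(T_j) = \sum_i \mathbbm{1}\{T_i\le T_j\}$ has the same distribution for every $j$. Hence
\[
\prob\{R_{M+1}\le k\} = \frac{1}{M+1}\,\mathsf{E}\Bigl[\#\{j: R_j \le k\}\Bigr].
\]
The third step is a deterministic counting bound: for any real numbers $t_1,\ldots,t_{M+1}$, the number of indices $j$ with $\#\{i: t_i\le t_j\}\le k$ is at most $k$. Suppose there were $k+1$ such indices. Among them take the one, $j^*$, with the largest value $t_{j^*}$. All $k+1$ of these indices satisfy $t_i\le t_{j^*}$, so $R_{j^*}\ge k+1$, a contradiction. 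Plugging this bound into the display gives $\prob\{R_{M+1}\le k\}\le k/(M+1)$ for $k=1,\ldots,M+1$. That is exactly the claim, and it amounts to stochastic dominance over $\unif\{1,\ldots,M+1\}$.

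The main point needing care is ties. The counting argument must handle them correctly, and it does: ties only increase ranks, because the rank uses $\le$ and counts $i=j$ itself. This is why the conclusion is an inequality rather than an equality. The only other subtlety is checking that the $T_i$ are exchangeable, not merely the $S_i$. This is precisely where permutation invariance of $\Psi_\theta$ is used, and I would spell out the map $G$ explicitly to make it rigorous.
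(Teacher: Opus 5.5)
Your proof is correct. It shares the paper's backbone: iid summaries plus permutation invariance of $\Psi_\Theta$ make the $T_i(\Theta)$ exchangeable. Where it differs is in how ties are handled.

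The paper's route:
\begin{itemize}
\item It first assumes no ties among $T_1(\Theta),\ldots,T_{M+1}(\Theta)$ and asserts that their ordering is then uniformly random, so the bound holds with equality.
\item It then simply states that ties can only make $\operatorname{rank}\{T_{M+1}(\Theta)\}$ stochastically larger.
\item Making that last step rigorous needs an extra argument. One option is a coupling with a randomly tie-broken version of the $T_i$, whose rank is exactly uniform and pointwise no larger.
\end{itemize}

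Your route uses only the fact that the ranks $R_j$ are identically distributed. You write $\prob\{R_{M+1}\le k\}$ as $(M+1)^{-1}$ times the expected number of indices with $R_j\le k$, and you bound that number by $k$ deterministically. Ties are absorbed in one step, with no case split and no auxiliary randomization.

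Your version also makes explicit the equivariance $T(x_\sigma)=T(x)_\sigma$, which justifies the exchangeability of the $T_i$ that the paper only asserts. It needs only equality of the marginal laws of the $R_j$, not uniformity of the whole ordering.

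The paper's route makes the tightness of the bound in the tie-free case immediate. You recover that as well: without ties the $R_j$ are a permutation of $\{1,\ldots,M+1\}$, so exactly $k$ of them are at most $k$, and the inequality becomes an equality.
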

\begin{proof}
Under the true $\Theta$, the components of $S^{M+1}(\Theta)$ are iid. Assume there are no ties among $T_1(\Theta), \ldots, T_{M+1}(\Theta)$. Due to the permutation-invariant nature of $\Psi_\Theta$, their ordering is uniformly random, so it follows that the above inequality is an equality. If ties are possible, $\operatorname{rank}\{T_{M+1}(\Theta)\}$ is stochastically no smaller than when ties are not possible,
which establishes the result. 
\end{proof}

\begin{thm}\label{thm:validity}
The likelihood-free IM for $\Theta$ established above, with contour as in \eqref{eq:valnorm}, is valid in the sense that
\begin{equation}\label{eq:MCIMvalidity}
\prob_\Theta^{n \times (M+1)}\{ \pi_{S^{M+1}}(\Theta) \leq \alpha\} \leq \alpha, \quad \text{for all $\alpha \in [0,1]$}.
\end{equation}
\end{thm}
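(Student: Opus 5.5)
The plan is to reduce the statement to Lemma~\ref{lemma:1} in two moves: first handle the normalization in \eqref{eq:valnorm}, then identify the validified ranking \eqref{eq:validified} at $\theta=\Theta$ with a rescaled rank.

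\textbf{Step 1: remove the normalization.} Since $\delta_{s^{M+1}}(\vartheta)\le 1$ for every $\vartheta$, the denominator in \eqref{eq:valnorm} satisfies $\sup_{\vartheta\in\TT}\delta_{s^{M+1}}(\vartheta)\le 1$. Hence, pointwise in the data and simulations,
\[
\pi_{S^{M+1}}(\Theta)\ \ge\ \delta_{S^{M+1}}(\Theta).
\]
The event $\{\pi_{S^{M+1}}(\Theta)\le\alpha\}$ is therefore contained in $\{\delta_{S^{M+1}}(\Theta)\le\alpha\}$. It suffices to prove $\prob_\Theta^{n\times(M+1)}\{\delta_{S^{M+1}}(\Theta)\le\alpha\}\le\alpha$.

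Two small points need care here. If the supremum is zero, the ratio is undefined. This cannot happen, because the $i=M+1$ term in \eqref{eq:validified} always contributes, so $\delta\ge 1/(M+1)>0$. Measurability of the supremum I will take for granted, as the paper implicitly does.

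\textbf{Step 2: rewrite $\delta$ as a rank.} When the observed summary is replaced by $S_{M+1}(\Theta)$, evaluating \eqref{eq:validified} at $\theta=\Theta$ gives exactly
\[
\delta_{S^{M+1}}(\Theta)=\frac{\operatorname{rank}\{T_{M+1}(\Theta)\}}{M+1}.
\]
This follows from the definitions $T_i(\Theta)=\Psi_\Theta(S^{M+1}_{-i},S_i)$ and $T_{M+1}(\Theta)=\Psi_\Theta(S^M,S_{M+1})$. The point is that under the true parameter, $s_{\rm obs}$ is a realization of $S_{M+1}(\Theta)$, jointly with $S^M(\Theta)$, as $M+1$ iid summaries. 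Consequently
\[
\{\delta_{S^{M+1}}(\Theta)\le\alpha\}=\bigl\{\operatorname{rank}\{T_{M+1}(\Theta)\}\le \alpha(M+1)\bigr\}.
\]

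\textbf{Step 3: apply Lemma~\ref{lemma:1}.} Fix $\alpha\in[0,1]$ and let $k=\lfloor \alpha(M+1)\rfloor$. Because the rank is integer-valued, the event above equals $\{\operatorname{rank}\{T_{M+1}(\Theta)\}\le k\}$.

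If $k=0$, the probability is zero, since the rank is at least $1$. Otherwise, Lemma~\ref{lemma:1} gives probability at most
\[
\frac{k}{M+1}\ \le\ \alpha.
\]
Combining with Step 1 yields \eqref{eq:MCIMvalidity}.

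\textbf{Main obstacle.} The only genuinely delicate step is the normalization in Step 1. Normalization can only increase the contour, which is why it is harmless here: the monotonicity argument handles it. The rest is bookkeeping with the floor function and the integer-valued rank, together with the observation that the observed statistic plays the role of the $(M+1)$th exchangeable draw.
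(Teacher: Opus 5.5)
Your proof is correct and follows the same route as the paper: you write $\delta_{S^{M+1}}(\Theta)=\operatorname{rank}\{T_{M+1}(\Theta)\}/(M+1)$, invoke Lemma~\ref{lemma:1}, and use $\pi_{S^{M+1}}(\Theta)\ge\delta_{S^{M+1}}(\Theta)$, which follows from the normalization in \eqref{eq:valnorm}. Your handling of $k=\lfloor\alpha(M+1)\rfloor$ and your check that the denominator is positive (via $\delta\ge 1/(M+1)$) fill in details the paper leaves implicit.
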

\begin{proof}
Start by noticing that the validified ranking in \eqref{eq:validified} can be rewritten as
\begin{align*}
\delta_{s^{M+1}}(\theta) &= \frac{1}{M+1} \sum_{i=1}^{M+1} \mathbbm{1}\{t_i(\theta) \leq t_{M+1}(\theta)\} = \frac{1}{M+1} rank\{ t_{M+1}(\theta)\}.
\end{align*}
By Lemma~\ref{lemma:1}, $\operatorname{rank}\{T_{M+1}(\Theta)\}$ is stochastically no smaller than $\unif\{1, \ldots, M+1\}$, so
\[\delta_{S^{M+1}}(\Theta)=\frac{\operatorname{rank}\{T_{M+1}(\Theta)\}}{M+1}\]
is stochastically no smaller than $\unif\left\{\frac{1}{M+1}, \frac{2}{M+1}, \ldots,1\right\}$, which is itself stochastically no smaller than $\unif(0,1)$. Since \eqref{eq:valnorm} implies $\pi_{s^{M+1}}(\theta) \geq \delta_{s^{M+1}}(\theta)$ for all $\theta \in \TT$, it follows that
\[\prob_\Theta^{n \times (M+1)}\{ \pi_{S^{M+1}}(\Theta) \leq \alpha\} \leq  \prob_\Theta^{n \times (M+1)}\{ \delta_{S^{M+1}}(\Theta) \leq \alpha\} \leq  \alpha,\]
which establishes the result.
\end{proof}

What distinguishes the finite-sample validity of the likelihood-free IM contour stated in \eqref{eq:MCIMvalidity} from that of the likelihood-based IM contour in \eqref{eq:IMvalidity} is that our proposal also accounts for uncertainty due to the Monte Carlo samples used in the ranking step. Pure finite-sample validity of the likelihood-based construction holds only when the sampling distribution of the likelihood ratio is known.
Since this is rarely the case, the Monte Carlo approximation in \eqref{eq:StandIMsMC} must usually be employed, making finite-sample validity only approximate—although, in practice, this is typically not a concern when $L$ is large. On the other hand, the likelihood-free approach proposed here achieves finite-sample validity regardless of the size of $M$. The motivation for choosing a large $M$ in this setting is to improve the efficiency of the resulting inferences: comparing the observed sample against a larger number of Monte Carlo replicates provides a clearer picture of which values of $\Theta$ are plausible than when the comparison is based on only a few replicates.
From a more theoretical perspective, note that as $M$ increases, the distribution of $\pi_{S^{M+1}}(\Theta)$ approaches \unif(0,1), which represents the upper bound of efficiency for an IM contour.

While efficiency can be improved by increasing $M$, it also depends on another factor—the choice of summary statistic. A key advantage of the proposed IM is that its finite-sample validity holds for any summary, regardless of its dimension. This is a powerful property, since a finite-sample valid IM can be constructed even when only a summary of the data is available, rather than the full sample. When the full data are available and the analyst can choose the summary, efficiency considerations become important. In general, more efficient inference can be expected when (i) the summary has a dimension matching that of the parameter of interest, and (ii) the summary captures more information about the parameter—for instance, when it is a sufficient statistic. These points are illustrated in Section~\ref{ss:cor}. In summary, efficiency can be pursued along two complementary directions: by selecting an appropriately informative summary statistic and, given this statistic, by increasing the number of Monte Carlo samples $M$.

The following corollary establishes the key features of our likelihood-free IM that result from the validity property in \eqref{eq:MCIMvalidity}, analogous to the ones established  in Corollary~\ref{cor:Main} for the likelihood-based IM.

\begin{cor}\label{cor:Main_LF}
    For an IM based on the contour in \eqref{eq:valnorm}, the following is true for all $\alpha \in [0,1]$:  
    \begin{enumerate}
        \item The degrees of belief and plausibility for all claims $C 
        \subseteq \mathbb{T}$ are calibrated, i.e.,
        \begin{equation}\label{eq:lower_upperLFIM}
          \sup_{\theta \notin C}\prob_\theta^{n \times (M+1)} \left\{ \lPi_{S^{M+1}}(C) \geq 1-\alpha \right \} \leq \alpha \quad \text{and} \quad \sup_{\theta \in C}\prob_\theta^{n \times (M+1)} \left\{ \uPi_{S^{M+1}}(C) \leq \alpha \right \} \leq \alpha.
          \end{equation}
        \item The $\alpha$ level sets of the IM's contour, 
$C_\alpha\{{s^{M+1}(\theta)}\} = \{\theta \in \mathbb{T}: \pi_{s^{M+1}}(\theta) > \alpha\}$,
are nominal set estimates for $\Theta$ in the sense that
\[ \prob^{n\times(M+1)}_\Theta\bigl[ C_\alpha\{{S^{M+1}(\Theta)}\} \not\ni \Theta \bigr] \leq \alpha. \]
        \item The degrees of belief and plausibility are uniformly calibrated. For example, the degrees of plausibility satisfy
\[\sup_{\theta \in C}\prob^{n\times(M+1)}_\theta\bigl\{ \uPi_{S^{M+1}}(C) \leq \alpha \text{ for some $C$ with $C \ni \theta$} \bigr\} \leq \alpha.\]           
\end{enumerate}
\end{cor}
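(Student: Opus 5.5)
All three parts will be deduced from the validity property \eqref{eq:MCIMvalidity} of Theorem~\ref{thm:validity}, together with the possibility calculus $\uPi_{s^{M+1}}(C)=\sup_{\theta\in C}\pi_{s^{M+1}}(\theta)$ and $\lPi_{s^{M+1}}(C)=1-\uPi_{s^{M+1}}(C^c)$. This mirrors the argument behind Corollary~\ref{cor:Main}. The only new ingredient is that the randomness is now over $S^{M+1}(\theta)$ under $\prob_\theta^{n\times(M+1)}$ rather than over $Z^n$.

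One point must be settled first: Theorem~\ref{thm:validity} is stated at the true $\Theta$, but its proof (via Lemma~\ref{lemma:1}) uses only that the $M+1$ summaries are iid from $\prob_\theta^n$. Hence, for every fixed $\theta\in\TT$, I will record
\[\prob_\theta^{n\times(M+1)}\{\pi_{S^{M+1}}(\theta)\le\alpha\}\le\alpha \quad \text{for all } \alpha\in[0,1].\]
This is needed because the corollary takes suprema over $\theta$.

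\textbf{Part 1.} Fix $C$ and $\theta\in C$. Since $\uPi_{S^{M+1}}(C)\ge\pi_{S^{M+1}}(\theta)$, we have the event inclusion $\{\uPi_{S^{M+1}}(C)\le\alpha\}\subseteq\{\pi_{S^{M+1}}(\theta)\le\alpha\}$. Its probability is therefore at most $\alpha$; taking the supremum over $\theta\in C$ gives the plausibility bound. For belief, take $\theta\notin C$, so $\theta\in C^c$. Then $\{\lPi_{S^{M+1}}(C)\ge1-\alpha\}=\{\uPi_{S^{M+1}}(C^c)\le\alpha\}$, and the plausibility bound applied to the claim $C^c$ finishes it.

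\textbf{Part 2.} The event $C_\alpha\{S^{M+1}(\Theta)\}\not\ni\Theta$ is exactly $\{\pi_{S^{M+1}}(\Theta)\le\alpha\}$, so \eqref{eq:MCIMvalidity} applies directly.

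\textbf{Part 3.} Fix $\theta\in C$. If $\uPi_{S^{M+1}}(C')\le\alpha$ for some $C'\ni\theta$, then $\pi_{S^{M+1}}(\theta)\le\uPi_{S^{M+1}}(C')\le\alpha$. Conversely, $C'=\{\theta\}$ shows the two events coincide. The union event is therefore just $\{\pi_{S^{M+1}}(\theta)\le\alpha\}$, which has probability at most $\alpha$; take the supremum over $\theta$.

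The only delicate step is the pointwise extension of validity to arbitrary $\theta$ noted above. A related subtlety is that the simulated summaries $s^M(\theta)$ depend on $\theta$ while $\pi$ is evaluated across $\theta$. The statements, however, concern the contour at a single $\theta$ under the data-generating law indexed by that same $\theta$, so the earlier exchangeability argument applies verbatim. Measurability of the union event is sidestepped because it equals a single-$\theta$ event.
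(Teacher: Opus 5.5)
Your proposal is correct and follows essentially the same route as the paper: each part is reduced to the validity bound of Theorem~\ref{thm:validity}. The three reductions are the monotonicity inequality $\uPi_{S^{M+1}}(C)\ge\pi_{S^{M+1}}(\theta)$ for $\theta\in C$, the identity $\{C_\alpha\not\ni\theta\}=\{\pi_{S^{M+1}}(\theta)\le\alpha\}$, and the equivalence of the union event with the single-$\theta$ event. Your handling of the belief bound through $\lPi_{S^{M+1}}(C)=1-\uPi_{S^{M+1}}(C^c)$, your use of the singleton $C'=\{\theta\}$ for the converse in Part~3, and your remark that validity holds at every $\theta\in\TT$ simply make explicit details the paper leaves as ``analogous.''
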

\begin{proof}
The following hold for the corresponding statements above:
\begin{enumerate}
    \item Consider the right-hand side expression (an analogous argument can be made for the left-hand side one). For any claim $C$ that contains $\Theta$ we have that $\uPi_{s^{M+1}}(C) \geq \pi_{s^{M+1}}(\Theta)$ due to the monotonicity of the possibility contour. 
\item $C_\alpha\{ s^{M+1}(\Theta) \} \not\ni \Theta$ if and only if $\pi_{s^{M+1}}(\Theta) \leq \alpha$.
\item There exists a $C$ such that $C \ni \Theta$ and $\uPi_{s^{M+1}}(C) = \sup_{\Theta \in C} \pi_{s^{M+1}}(\Theta) \leq \alpha$ if and only if $\pi_{s^{M+1}}(\Theta) \leq \alpha$.
\end{enumerate}
These facts, combined with \eqref{eq:MCIMvalidity} in Theorem \ref{thm:validity}, give the results.
\end{proof}

As discussed in Section~\ref{s:IMs}, in the standard IM construction, the validity of the contour for $\Theta$, together with the possibility calculus based on optimization, yields marginal IMs for any feature $\Phi = g(\Theta)$ that retain all the calibration guarantees stated in Corollary~\ref{cor:Main}. It is therefore natural to expect that this reliable marginalization can also be achieved within the likelihood-free construction proposed here. Corollary~\ref{cor:LF_marg} confirms this. 

\begin{cor}\label{cor:LF_marg}
    Let $\Phi = g(\Theta)$ be a feature of interest, where $g$ is a function defined on $\TT$, and let $\pi_{s^{M+1}}$ be a valid likelihood-free IM contour for $\Theta$ in the sense of \eqref{eq:MCIMvalidity}. Then the marginal IM for $\Phi$ with contour
    \[\pi_{s^{M+1}}^g(\phi) = \sup_{\theta: g(\theta) = \phi}\pi_{s^{M+1}}(\theta), \quad \phi \in g(\TT),\]
is valid in the sense that  
\[\prob_\Phi^{n \times (M+1)}\{ \pi_{S^{M+1}}^g(\Phi) \leq \alpha\} \leq \alpha, \quad \text{for all $\alpha \in [0,1]$}.\]
\end{cor}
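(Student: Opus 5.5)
The plan is to reduce the marginal statement to the validity of the full contour in Theorem~\ref{thm:validity}, using a pointwise domination argument. No new probability calculation should be needed.

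First, fix the true parameter $\Theta$ and set $\Phi = g(\Theta)$. The probability $\prob_\Phi^{n\times(M+1)}$ in the statement is read as the distribution of $S^{M+1}(\Theta)$, that is, the $M$ simulated summaries together with the observed one, all generated under $\prob_\Theta^n$. Marginalization is a deterministic operation on the contour, so this is the only probability measure involved. Along the way I will note that the simulated summaries entering $\pi^g$ at a value $\phi$ are those generated under every $\theta$ with $g(\theta)=\phi$. This causes no difficulty, because only the component at $\theta=\Theta$ is used below.

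The key step is a deterministic inequality. For every realization $s^{M+1}$, the true value $\Theta$ belongs to $\{\theta: g(\theta)=\Phi\}$, so the supremum defining the marginal contour is at least the value at $\Theta$:
\[
\pi^g_{s^{M+1}}(\Phi) = \sup_{\theta: g(\theta)=\Phi} \pi_{s^{M+1}}(\theta) \;\geq\; \pi_{s^{M+1}}(\Theta).
\]
This gives the event inclusion $\{\pi^g_{S^{M+1}}(\Phi)\leq \alpha\} \subseteq \{\pi_{S^{M+1}}(\Theta)\leq\alpha\}$. Taking probabilities and applying \eqref{eq:MCIMvalidity} from Theorem~\ref{thm:validity} then gives
\[
\prob\{\pi^g_{S^{M+1}}(\Phi)\le\alpha\}\le\prob\{\pi_{S^{M+1}}(\Theta)\le\alpha\}\le\alpha
\]
for every $\alpha\in[0,1]$.

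There is essentially no analytic obstacle. The only points needing care are measurability of the supremum and bookkeeping about which measure is meant. For measurability, I would avoid the issue entirely by working with the event inclusion rather than with the distribution of $\pi^g$ itself, or with outer probability if necessary. The measure is $\prob_\Theta$, not a measure indexed by $\Phi$, since several $\theta$ can share the same $\Phi$. Because the inclusion holds for whichever $\Theta$ generated the data, the bound holds uniformly over all $\theta$ with $g(\theta)=\Phi$, which justifies the notation $\prob_\Phi$. The properties in Corollary~\ref{cor:Main_LF} for the marginal IM then follow by the same arguments as in its proof.
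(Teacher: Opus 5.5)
Your proposal is correct and follows the paper's own argument. The paper's one-line proof likewise uses the fact that $\pi^g_{s^{M+1}}(\Phi)$ is a supremum over a set containing $\Theta$, so it dominates $\pi_{s^{M+1}}(\Theta)$, and validity then transfers directly from Theorem~\ref{thm:validity}; your added remarks (the relevant measure is really $\prob_\Theta$, and simulations at $\theta \neq \Theta$ play no role in the bound) make explicit what the paper leaves implicit.
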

\begin{proof}
Since $\pi_{s^{M+1}}^g$ is computed by optimizing $\pi_{s^{M+1}}$ and $\pi_{S^{M+1}}(\Theta)$ satisfies \eqref{eq:MCIMvalidity}, it then follows that $\pi_{S^{M+1}}^g(\Phi)$ is also stochastically no smaller than \unif(0,1), establishing the result.
\end{proof}

The properties in Corollary~\ref{cor:Main_LF} carry over to marginal IMs, ensuring that users obtain (uniformly) calibrated lower and upper probabilities for claims of interest about $\Phi$, together with nominal set estimates.
The proof is omitted, since the arguments are analogous to those used in Corollary~\ref{cor:Main_LF}.
Marginal calibration guarantees are crucial, as they protect data analysts from unintentional errors in subsequent inferences; see Section~\ref{ss:gk}.

\section{Synthetic data examples}
\label{s:examples}

\subsection{Correlation coefficient}
\label{ss:cor}

Suppose we're working with bivariate normal data with zero means, unit variances, and an unknown correlation coefficient $\Theta$. Figure~\ref{fig:cor}(a) shows the likelihood-free IM contour (in blue) for one dataset of size $n = 30$, constructed using the sample correlation as the summary statistic, Mahalanobis depth, and $M=1{,}000$. For comparison, the black contour shows what we would obtain if we had access to the full dataset and used the likelihood-based IM construction. While this is not an extreme case in which the MLE is extremely difficult to obtain, its evaluation still requires numerical computation (recall that the MLE is not the sample correlation coefficient). To confirm the validity of the proposed likelihood-free solution, $1{,}000$ datasets of size $n = 30$ were generated from $\Theta = 0.5$. For each repetition, the possibility contour at $\Theta = 0.5$ was evaluated. Figure~\ref{fig:cor}(b) shows the empirical distribution function of these possibilities, confirming its uniform distribution.

Figure~\ref{fig:cor}(a) shows that there is no significant loss of efficiency when comparing the likelihood-free IM based on the correlation coefficient and the likelihood-based IM. We argued in Section~\ref{ss:lfIMs:construction} that, towards efficiency, one should try to select a summary statistic with the same dimension as the parameter of interest, and that the more directly this statistic reflects the parameter, the better. To illustrate this, Figure \ref{fig:cor} also displays likelihood-free IM contours based on the two-dimensional sufficient statistic (red)—the sum of squared values from both variables and the sum of the cross–products of the paired observations—and on the single statistic given by the cross–product sum alone (green), both constructed using the Mahalanobis distance with $M=1{,}000$. Using the two-dimensional sufficient statistic, although it contains all relevant information about $\Theta$, results in inefficiency because of its unnecessarily high dimensionality. Conversely, relying on just one sufficient statistic is also inefficient because it discards part of the relevant information about $\Theta$. This explains why the sample correlation coefficient is particularly effective: it is a one-dimensional statistic that, in a sense, captures the essential information contained in the sufficient statistics. Figure~\ref{fig:cor}(b) also shows the empirical distribution function of the two less-efficient likelihood-free IMs in the same aforementioned simulation setting. It confirms that, aside from efficiency considerations, likelihood-free IMs are valid regardless of the choice of summary statistic.

\begin{figure}[t]
\begin{center}
\subfigure[Possibility contours.]{\scalebox{0.42}{\includegraphics{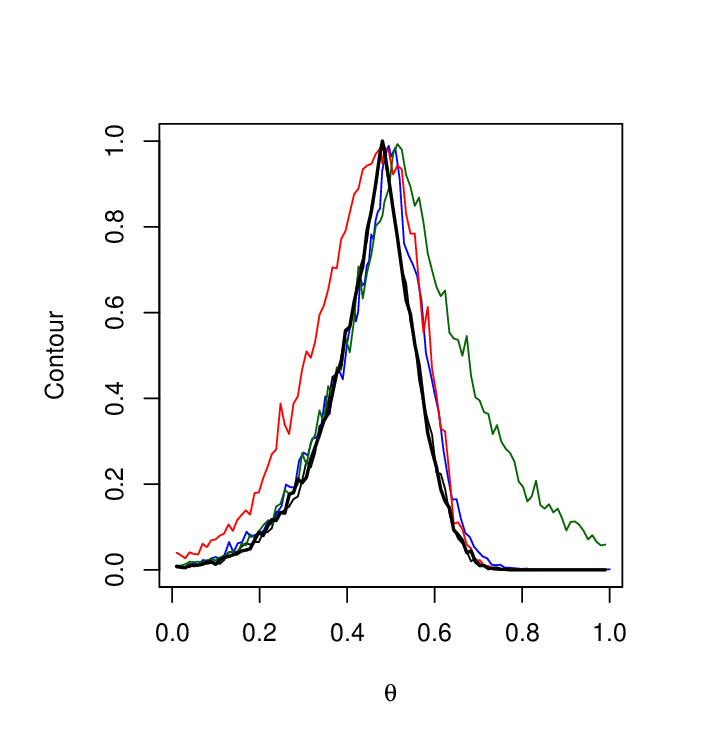}}}
\subfigure[Distribution functions of $\pi_{S^{M+1}}(\Theta)$.]{\scalebox{0.42}{\includegraphics{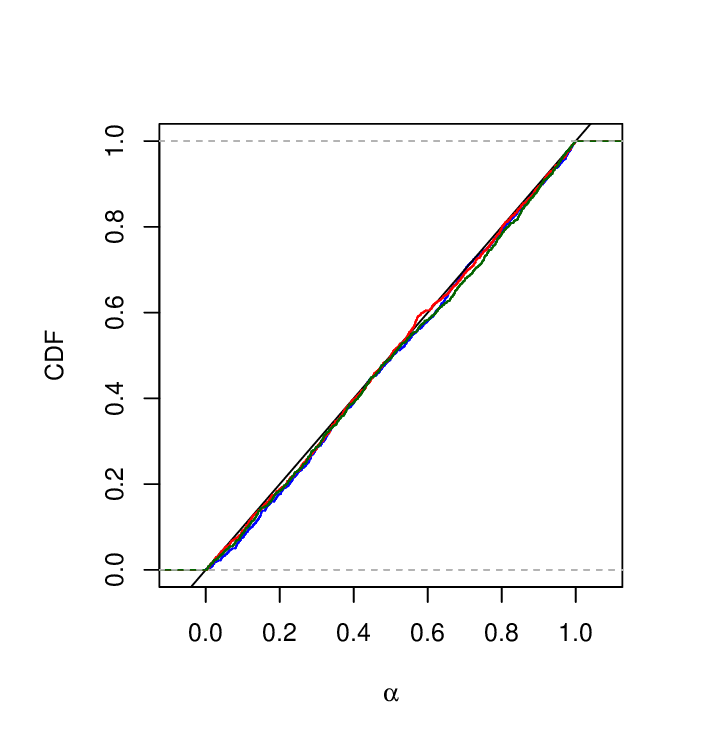}}}
\end{center}
\caption{Results shown in black correspond to the likelihood-based contour. Results shown in blue, red, and green correspond to the likelihood-free contours based on the sample correlation coefficient, the two-dimensional sufficient statistic, and a single sufficient statistic, respectively.}
\label{fig:cor}
\end{figure}

\subsection{$g$-and-$k$ distributions}
\label{ss:gk}

The $g$-and-$k$ family of distributions \citep{haynes1997robustness, rayner2002numerical} has been an appealing choice to model complex data in financial and climate applications, among others, due to its ability to capture asymmetry and heavy tails in a parsimonious way. The distribution is defined through its quantile function
\begin{align*}
Q(u)&= \mu + \sigma w_u \left( 1+c \, \frac{1-e^{-gw_u}}{1+e^{-gw_u}} \right) (1+w_u^2)^{k}, \quad 0 \leq u \leq 1,
\end{align*}
where $\mu \in \mathbb{R}$ is a location parameter, $\sigma > 0$ is a scale parameter, $g \in \mathbb{R}$ measures skewness and $k >-1/2$ measures kurtosis, $w_u = \Phi^{-1}(u)$ is the $u^\text{th}$ standard Gaussian quantile, and $c$ is a constant corresponding to the value of ``overall symmetry''. It is common to set $c=0.8$ \citep[see, e.g.][]{rayner2002numerical, drovandi2011likelihood}, and we adopt this convention throughout this section.

After observing data $z^n$, the goal is to construct an IM for $\Theta = \{\mu,\sigma,g,k\}$. The standard IMs construction based on the likelihood ratio faces major challenges since the density function of the $g$-and-$k$ distribution is not available in closed form (except in trivial cases). In fact, to get the likelihood one needs to solve the inverse problem $z_i = Q(u_i)$ for each observation $z_i$, $i= 1,\ldots, n$, which is both expensive and numerically challenging. Simulation from the model is, however, straightforward, requiring just a transformation of standard normal quantiles. This makes the proposed likelihood-free IM an appealing strategy for provably finite-sample valid probabilistic inferences on $\Theta$.

As an illustration, Figure~\ref{fig:gk}(a) shows the histogram of a dataset of size $n=100$ that is known to come from a $g$-and-$k$ distribution with $\mu=0$ and $\sigma=1$, but the values of the skewness and kurtosis parameters $g$ and $k$ are unknown. Towards a likelihood-free IM for $\Theta=(g,k)$, we use the sample skewness and sample kurtosis as a bivariate summary statistic, Tukey depth, and set $M=250$. Figure~\ref{fig:gk}(b) shows the 0.2-, 0.5-, and 0.8-level sets of the IM contour for $\Theta$. Because simulation from the $g$-and-$k$ model is straightforward, it has become a canonical example for demonstrating simulation-based inference methods, such as approximate Bayesian computation (ABC) \citep[e.g.,][]{fearnhead2012constructing}. For comparison, Figure~\ref{fig:gk}(b) also displays the 20\%, 50\%, and 80\% highest posterior density regions for $\Theta$, obtained via an ABC rejection sampler with weakly informative priors $g \sim \unif(-10,10)$ and $k \sim \unif(-0.5,10)$. The observed and simulated data were compared using the sample skewness and sample kurtosis, and the closest 1\% of parameter draws were retained. The joint posterior was then estimated using a kernel density approximation.
Note that while the Bayesian solution appears more efficient, this is meaningless if the Bayesian posterior is not properly calibrated.

To evaluate this, we conduct a simulation study following the setup described above, where $\Theta = (g, k)$. We generate 500 datasets, each consisting of $n = 100$ observations, using the true parameter values $\Theta_1 = 2$ and $\Theta_2 = 0.25$. We consider the following three claims:  
\[C_1: 3 < \Theta_1 < 4, \quad 
C_2: \Theta_2 < 0, \quad 
C_3: \Theta_1 / \Theta_2 > 10,\]
each concerning a different feature of $\Theta$.  
Note that {\em all three hypotheses are false}, i.e., the true parameter does not satisfy any of these constraints.  Figure~\ref{fig:gk}(c) shows the distribution functions of the lower probabilities assigned to these claims by the likelihood-free IM. As implied by the left-hand side of \eqref{eq:lower_upperLFIM}, calibration corresponds to the curves lying above the diagonal line, which is observed in all three cases.  
For comparison, Figure~\ref{fig:gk}(c) also displays the distribution functions of the posterior probabilities assigned to the same claims by an ABC rejection sampler with weakly informative priors, as described above. While the posterior probability for $C_2$ is properly calibrated, those for $C_1$ and $C_3$ are not: curves below the diagonal indicate that posterior probabilities for these false claims are stochastically larger than $\unif(0,1)$. In particular, $C_3$ is assigned posterior probabilities exceeding 0.5 about 70\% of the time. This lack of calibration is undesirable, as it can lead to ``systematically misleading conclusions'' \citep{ReidandCox2015}.

\begin{figure}[t]
\begin{center}
\subfigure[]{\scalebox{0.42}{\includegraphics{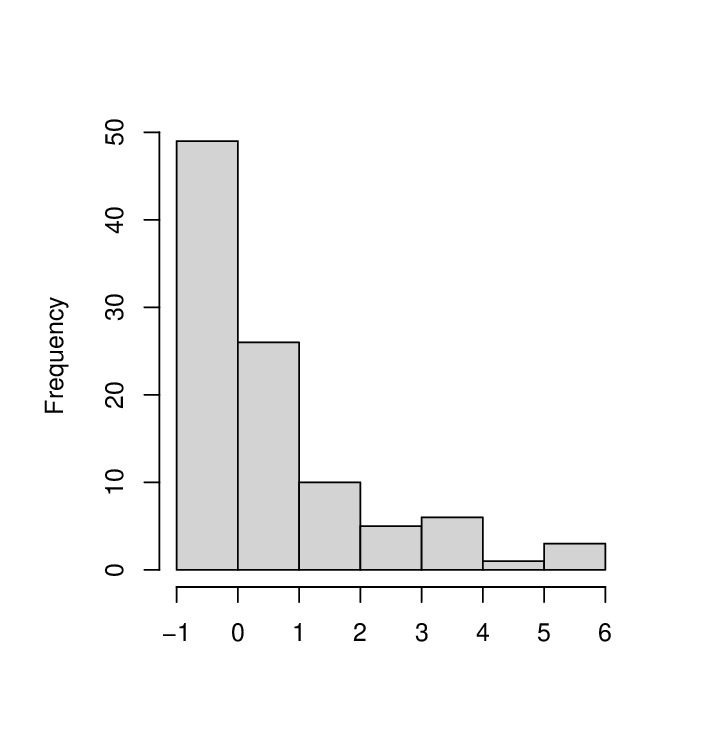}}}
\subfigure[]{\scalebox{0.42}{\includegraphics{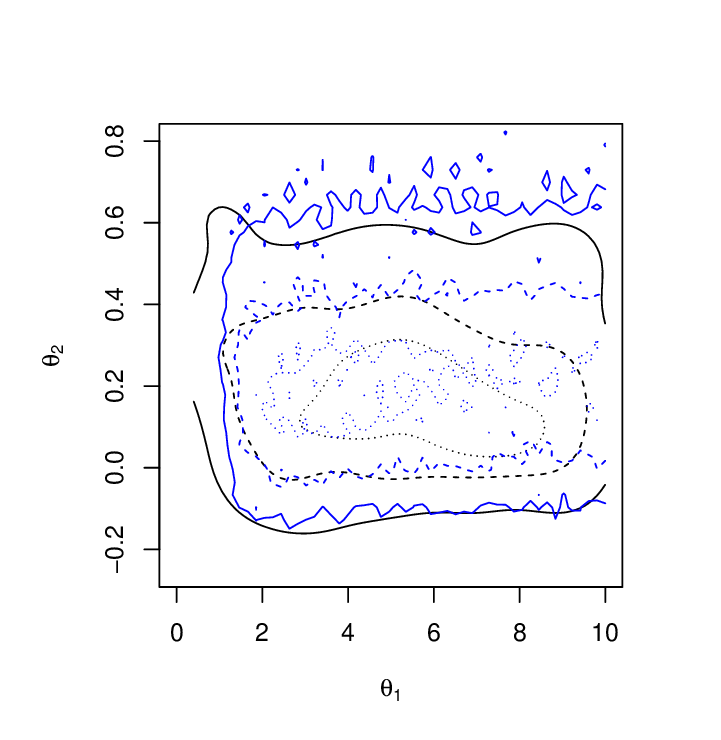}}}
\subfigure[]{\scalebox{0.42}{\includegraphics{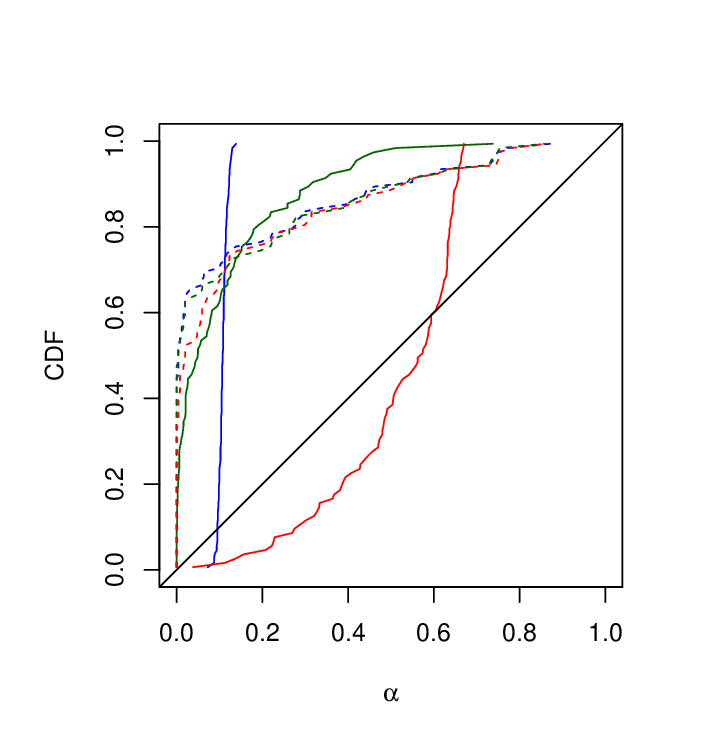}}}
\end{center}
\caption{Panel~(a): Histogram of a dataset, n=100. Panel~(b): In blue, the 0.2-, 0.5-, and 0.8-level sets of the likelihood-free IM contour. In black, the 20\%, 50\%, and 80\% highest posterior density regions of the ABC solution. Solid, dashed, and dotted lines correspond to increasing levels. Panel~(c): Distribution functions of
the IM lower probabilities (dashed) and Bayes posterior probabilities (solid) assigned to $C_1$ (blue), $C_2$ (green) and $C_3$ (red), based on 500 simulation replicates.}
\label{fig:gk}
\end{figure}

\subsection{Differential privacy}
\label{ss:dp}

The development of privacy-preserving methodologies has advanced substantially over recent decades, with differential privacy (DP) emerging as the leading framework for data protection. The formal definition of DP was introduced by \citet{Dwork2026}, and requires that the output distribution of a randomized algorithm remain nearly unchanged when a single individual’s record is added or removed from the dataset. More specifically, a mechanism is said to satisfy $\varepsilon$-differential privacy if, for any two datasets differing in one entry and for any measurable subset of outputs, the probability assigned to that subset changes by at most a multiplicative factor of $e^\varepsilon$. This definition provides strong worst-case privacy guarantees while remaining compatible with a wide range of statistical and machine learning tasks.

Analyzing privatized data is challenging because the addition of noise to ensure privacy alters the sampling distribution of estimators in non-standard ways that depend on the mechanism employed. As a result, classical inference procedures are generally invalid without adjustment \citep{karwa2018finite,rogers2021guide}. In the DP framework, however, the distribution of the added noise can be publicly disclosed without compromising privacy guarantees, allowing analysts to incorporate the additional uncertainty into their statistical reasoning.
Nevertheless, the marginal likelihood of parameters based on privatized statistics is often computationally intractable, making approximation methods—such as the parametric bootstrap \citep{ferrando2022parametric} or asymptotic arguments \citep{wang2018statistical}—the norm in applications, though these do not guarantee validity in finite samples. To address this limitation, \citet{Awan03012025} recently proposed a general-purpose frequentist simulation-based inference framework, built on repro samples \citep{xie2022repro}, that yields finite-sample valid confidence intervals and tests and can be applied to DP problems. On the probabilistic uncertainty quantification side, MCMC approaches targeting the posterior distribution given privatized data have also been explored \citep[e.g.,][]{ju2022dataaugmentationmcmc}, although the False Confidence Theorem raises concerns about the reliability of the resulting posterior summaries. This naturally raises the question: can an IM be constructed that is suitable for DP problems in general? If so, one could obtain the best of both worlds—frequentist-calibrated procedures together with meaningful probabilistic assessment—a contribution of significant value for the analysis of privatized data.

The aforementioned intractability of the likelihood function poses a challenge to the construction of likelihood-based IMs in a DP context. However, since the noise distribution is publicly available under DP, sampling from it is straightforward, making the likelihood-free IMs construction a natural general-purpose method for finite-sample valid probabilistic inference in DP settings. As an example, suppose $z^n$ is composed by iid Bernoulli random variables with unknown mean $\Theta$, and the goal is to construct an IM for $\Theta$. If the raw data or the sufficient statistic $\sum_{i=1}^n z_i$ were available, the derivation of a likelihood-based IM would be straightforward. Instead, assume the analyst observes only a privatized release $\sum_{i=1}^n z_i + N$, where $N$ is some independent noise random variable. 
Following \citet{Awan_Slavkovic_2020,Awan03012025}, we assume $N \sim \Tulap \{ 0,\exp(-1),0 \}$, where $\Tulap(m,b,q)$ denotes the Tulap (``truncated-uniform Laplace'') distribution. The parameter $m \in \mathbb{R}$ specifies location, while $b=\exp(-\epsilon)$ is determined by the privacy level $\epsilon$, and $q$ controls truncation. Choosing $N \sim \Tulap \{ 0,\exp(-1),0 \}$ guarantees 1-DP. Moreover, \citet{Awan_Slavkovic_2020} show that $N \stackrel{d}{=} G_1 - G_2 + U$,
where $G_1, G_2$ and $U$ are independent, with $G_1,G_2 \iid \geom \{ 1-\exp(-1) \}$ and $U\sim\unif(-1/2,1/2)$. This representation makes it straightforward to simulate the privatized statistic $\sum_{i=1}^n Z_i + N$ under different values of $\Theta$.

Figure~\ref{fig:dp} shows the likelihood-free IM contour for one data set of size $n=25$, obtained using the Mahalanobis depth and $M=1{,}000$. For comparison, the likelihood-based IM contour is also displayed. The likelihood can be derived here since the sampling distribution of the privatized statistic can be obtained through the convolution of the binomial distribution of $\sum_{i=1}^n z_i$ and the Tulap distribution of $N$; see \citet{Awan_Slavkovic_2020} for details. Moreover, the MLE can be obtained by numeric optimization. Note that the likelihood-free contour closely resembles the likelihood-based contour, and no efficiency appears to be lost by using the former, with the added benefit of not requiring an analytical sampling distribution or numerical optimization. To verify this, we conduct a simulation study in which the above scenario is repeated $1{,}000$ times. Table~\ref{tab:dp} reports the average widths of the 30\%, 60\%, and 90\% interval estimates obtained from both approaches, showing negligible differences. The table also presents the corresponding average coverages, confirming that the IM interval estimates are properly calibrated, whether based on the likelihood or constructed in a likelihood-free manner.

\begin{figure}[t]
  \centering
  \includegraphics[height=4.9cm,trim= 0 5 8cm 17cm,clip]{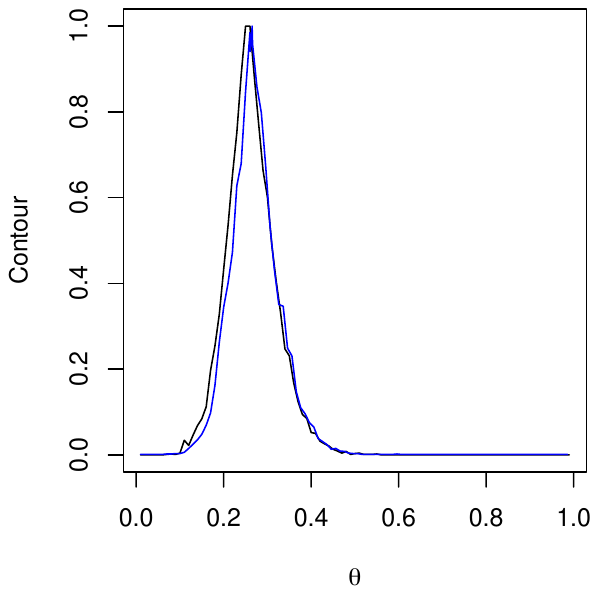} 
  \caption{Likelihood-based contour (black) and likelihood-free contour (blue).}
  \label{fig:dp}
\end{figure}

\begin{table}[t]
\caption{Average interval width ($\times 100$) with average coverage in parentheses} \label{tab:dp}
\centering
\begin{tabular}{lrrr}
\hline
\multirow{2}{*}{IM Contour} & \multicolumn{3}{c}{$100(1-\alpha)\%$} \\
 & $30$ & $60$ & $90$ \\
\hline
likelihood-based & 6.10 (0.32) & 12.74 (0.63) & 23.76 (0.90) \\
likelihood-free  & 5.35 (0.30) & 11.63 (0.58) & 22.45 (0.89) \\
\hline
\end{tabular}
\end{table}

\subsection{Ising model}

The Ising model, named after the physicist Ernst Ising \citep{ising1924}, has become a widely used framework for modeling binary data, finding applications in a wide range of areas, e.g., quantum computation \citep{lahtinen2017}, statistical genetics \citep{majewski2001}, spatial statistics \citep{besag1974}, among others. In its two-parameter form, the model is defined on an undirected graph in which each node is a binary variable $z_i \in \{-1,1\}$, $i=1,\dots,n$. The connections between $z_i$'s are determined by a symmetric coupling matrix $A_n \in \mathbb{R}^{n \times n}$. The likelihood of the Ising model is defined as the probability of the vector $z=(z_1, \ldots, z_n)^\top \in \{-1,1\}^n$:
\begin{equation}\label{eq:Ising_like}
\prob_{\beta,B}(Z=z) \propto \exp\!\Big( \tfrac{\beta}{2} z^\top A_n z + B \sum_{i=1}^n z_i \Big),    
\end{equation}
where $\beta > 0$ represents the interaction strength among $z_i$'s and $B\neq0$ represents external influence on $Z$. Following \citet{bhattacharya2018,ghosal2020,okabayashi2011,kim2024} we assume that $A_n$ is fully specified, so the parameter of interest is $\Theta=(\beta,B)$.

Any likelihood-based inference method faces challenges in making inferences for $\Theta$ due to the intractability of the normalizing constant in \eqref{eq:Ising_like}. More specifically, computing this constant requires summing over $2^n$ configurations, which is infeasible except for very small $n$. A common solution is to replace the true likelihood with a pseudo-likelihood; see \citet{bhattacharya2018,ghosal2020,okabayashi2011,kim2024}.
Simulation-based inference \citep[e.g.,][]{grelaud2009} is also an option: although direct likelihood evaluation is impractical, simulation from the model is possible through Markov chain Monte Carlo methods such as Gibbs sampling, since the normalizing constant cancels in the conditional probabilities and acceptance ratios \citep{besag1974}.

As a first illustration, consider the {\em zero-field} Ising model where the external influence parameter $B$ is set to zero, so that $\Theta=\beta$. We take $A_n$ to be the standard nearest-neighbor matrix, meaning that each site interacts only with its immediate horizontal and vertical neighbors; equivalently, $A_n$ has entries equal to one if two sites are nearest neighbors and zero otherwise. In such case, the right-hand side of \eqref{eq:Ising_like} simplifies to
$\exp (\beta \sum_{(i,j)\in E} z_i z_j )$,
where $E$ denotes the set of nearest--neighbor pairs. Even in this simplified setting, the normalizing constant remains intractable for moderate lattice sizes, so the exact likelihood often cannot be evaluated directly in practice. We therefore start with a $4\times 4$ lattice ($n=16$), which has $2^{16}=65{,}536$ configurations, making the computation of the exact likelihood—and hence the likelihood-based IM—feasible. Figure~\ref{fig:ising}(a) shows the observed data set, and Figure~\ref{fig:ising}(b) displays the likelihood-based IM contour for this dataset, together with the likelihood-free IM obtained using $\sum_{(i,j)\in E} z_i z_j$ as the summary statistic, Mahalanobis depth, and $M=500$. Note that the latter aligns closely with the former.

\begin{figure}[t]
\begin{center}
\subfigure[]{\scalebox{0.4}{\includegraphics{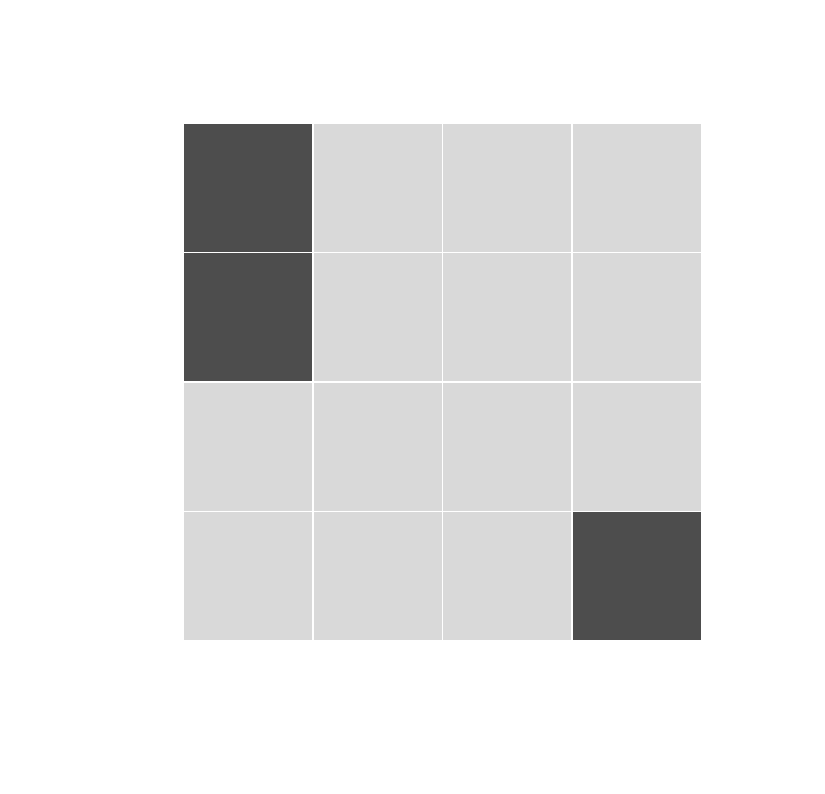}}}
\subfigure[]{\scalebox{0.4}{\includegraphics{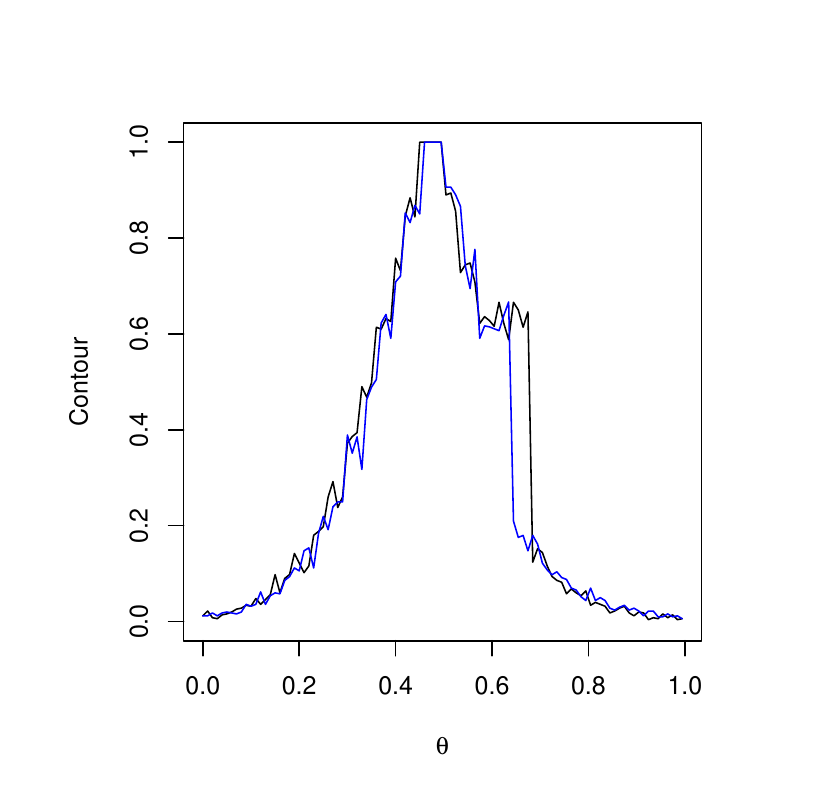}}}
\end{center}
\caption{Panel~(a): Observed data from a $4\times 4$ Ising model, where light gray cells represent $-1$ and dark gray cells represent $+1$. Panel~(b): Likelihood-based contour (black) and likelihood-free contour (blue) based on data in (a).}
\label{fig:ising}
\end{figure}

Now consider the general Ising model with $\Theta=(\beta,B)$, and, as before, let $A_n$ be the standard nearest-neighbor matrix. We take a $20\times20$ lattice ($n=400$). The likelihood-based IM is infeasible in this setting, since exact likelihood evaluation would require summing over $2^{400}$ configurations. The likelihood-free IM, however, remains feasible, as simulation from the model can be carried out using Gibbs sampling. Figure~\ref{fig:ising2}(a) shows the observed dataset, while Figure~\ref{fig:ising2}(b) displays the 0.1-level sets of two likelihood-free IMs, both constructed using $\sum_{(i,j)\in E} z_i z_j$ and $\sum_i z_i$ as summary statistics and $M=250$, one based on Mahalanobis depth and the other on Tukey depth. The dashed lines represent the true values of $\beta$ and $B$, which are captured by the 90\% confidence regions of both IMs. Note that the IM based on Mahalanobis depth is more efficient in this case, though this need not hold in general. The important point is that both constructions are valid, as confirmed in Figure~\ref{fig:ising2}(c), which shows that the empirical distribution function of the possibility contours evaluated at the true $\Theta$ is stochastically no smaller than \unif(0,1), based on 500 simulation replicates.

\begin{figure}[t]
\centering
\subfigure[]{\scalebox{0.95}{
    \includegraphics[height=4.3cm]{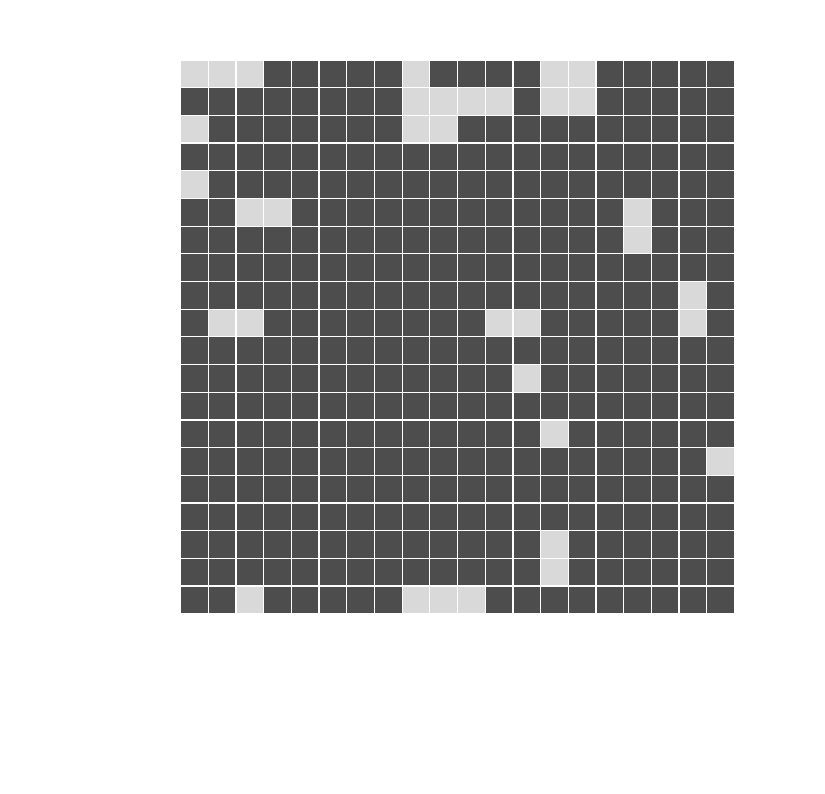}
    \label{fig:ising2b}
}}
\subfigure[]{\scalebox{0.7}{
    \includegraphics[height=5.7cm,trim=0 20 9.5cm 17cm,clip]{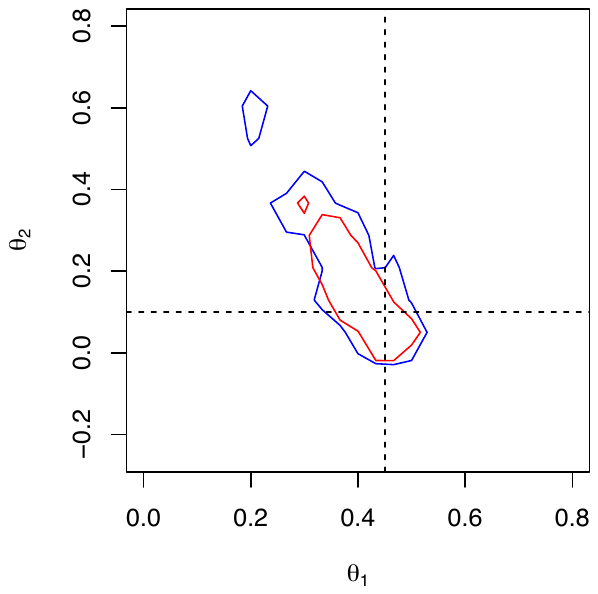}
    \label{fig:ising2b}
    }}
\hspace{0.02\textwidth}
\subfigure[]{\scalebox{0.99}{
    \includegraphics[height=4.0cm,trim= 0 5 9.5cm 17cm,clip]{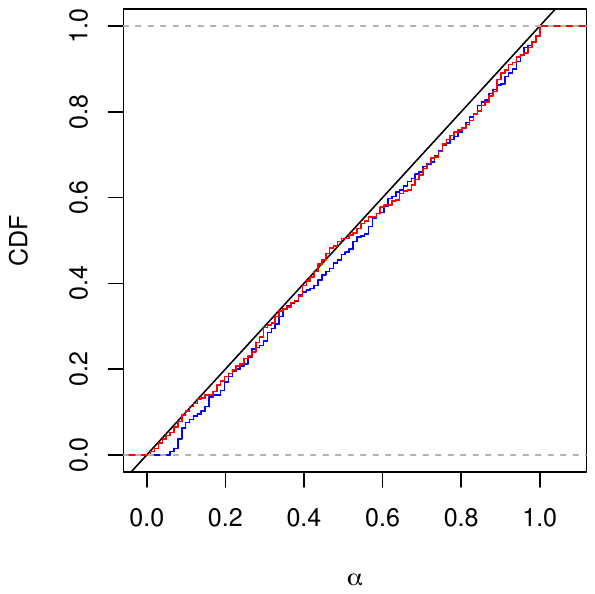}
    \label{fig:ising2c}
}}
\caption{Panel~(a): Observed data from a $20\times 20$ Ising model, where light gray cells represent $-1$ and dark gray cells represent $+1$. Panel~(b): 0.1-level sets of the likelihood-free IM contour using Tukey (blue) and Mahalanobis (red) depths. Panel~(c): Empirical distribution function of the likelihood-free IMs contours evaluated at the true $\Theta$, using Tukey (blue) and Mahalanobis (red) depths, based on 500 simulation replicates.}
\label{fig:ising2}
\end{figure}

\section{Real data example}
\label{s:real}

Measles is an extremely contagious viral infection with symptoms that include a fever, cough, runny nose, inflamed eyes and a rash. Complications from infection can be deadly, with most deaths occurring in children under the age of five \citep{Measles-Facts}. Measles was declared eliminated in the United States (U.S.) in 2000 due to the widespread availability of a safe and effective vaccine, which has been available worldwide since the 1960's, and a robust vaccination program \citep{Measles-History}. A strong resurgence of measles in 2025, however, threatens the U.S.' measles elimination status \citep{mathis2022maintenance}. The first measles death since 2015 was reported in February \citep{CNN}, and by July the number of cases had already surpassed that of the prior three decades \citep{Hopkins}. The year closed with a reported total of $2{,}281$ confirmed measles cases and three resulting deaths; the vast majority of cases occurred through local transmission, rather than importation through travellers from abroad, with 89\% of confirmed cases related to outbreaks \citep{CDC}. 

Given that almost 90\% of cases occur through local transmission, we aim to understand the static spatial configuration of measles cases across county lines in 2025 by examining how strongly neighbouring counties prefer to share the same infection status. We model measles case data published by \cite{MeaslesTracker}'s ``U.S. measles tracker'' using the Ising model of equation \eqref{eq:Ising_like}. The data are $z_i=1$ if county $i$ had at least one case in 2025 and $z_i=-1$ otherwise, $i=1, \ldots, 3235$. The symmetric coupling matrix $A_n$ has entries equal to one if two counties share a border and zero otherwise \citep[county adjacency data was downloaded from the][]{CensusBureau}. The parameters $B$ and $\beta$ respectively represent the local external pressure and the spatial transmission pressure for infection. The parameter $\beta$ is connected to the odds ratio of a county $i$ going from no measles cases ($z_i=-1$) to at least one case ($z_i=1$) given one of its neighbours $j$ goes from having no cases ($z_j=-1$) to at least one case ($z_j=1$):
\begin{align*}
\text{OR($z_i^{-1 \to 1} \mid z_j^{-1 \to 1}$)}&=\frac{\prob(z_i=1 | z_j=1, z_{-ij})/\prob(z_i=-1 | z_j=1, z_{-ij})}{\prob(z_i=1 | z_j=-1, z_{-ij})/\prob(z_i=-1 | z_j=-1, z_{-ij})} 
=\exp(4\beta),
\end{align*}
where $z_{-ij}=z \setminus \{ z_i,z_j\}$. 

Figure~\ref{fig:ising3}(a) shows the contour of the likelihood-free IM for $\Theta=(\beta,B)$ constructed again using $\sum_{(i,j)\in E} z_i z_j$ and $\sum_i z_i$ as summary statistics and $M=200$, based on the Mahalanobis depth. The central inferential question in the Ising model is whether there is spatial dependence $\beta \neq 0$, and if so, its alignment (sign) and strength. We derive a marginal IM for $\beta$, with contour shown in Figure~\ref{fig:ising3}(b). Given the natural interpretation of $\beta$ through the odds ratio, the horizontal axis in Figures~\ref{fig:ising3}(b) also displays the $\text{OR($z_i^{-1 \to 1} \mid z_j^{-1 \to 1}$)}=\exp(4\beta)$ scale. There is strong evidence supporting $\beta>0$, or equivalently $\text{OR($z_i^{-1 \to 1} \mid z_j^{-1 \to 1}$)} > 1$, indicating that neighbouring counties tend to share the same infection status. Given this apparent evidence, one may ask which claims of the form $C_\gamma : \beta > \gamma$, for $\gamma > 0$, are well supported. Although such claims are considered after inspecting the contour, the uniform calibration property of IMs established in Corollary \ref{cor:Main_LF} allows these data-dependent assessments. This question can be addressed using the marginal lower probabilities for $C_\gamma$, as shown in Figure~\ref{fig:ising3}(c). Once again, the results are also reported in the odds ratio scale. We see that the claim $\beta > 0.16$, or equivalently $\text{OR($z_i^{-1 \to 1} \mid z_j^{-1 \to 1}$)} > 1.9$, is well supported. There is evidence that the odds of a county having at least one measles case if a neighbouring county presents with at least one case are almost twice as large as if the neighbouring county has no cases, holding the other counties fixed. There is virtually no support, however, for the claim $\text{OR($z_i^{-1 \to 1} \mid z_j^{-1 \to 1}$)} > 2$. These supported claims about the magnitude of $\beta$ and $\text{OR($z_i^{-1 \to 1} \mid z_j^{-1 \to 1}$)}$ signal spatial dependence across neighbouring counties of moderate strength. While there appears to be meaningful local dependence between counties, there is no evidence of system-wide synchronization. This suggests that the spatial reach of measles outbreaks does not extend too far.

For comparison, we conduct a Bayesian analysis in which, following \citet{kim2024}, independent priors $\log \beta \sim \nm(0,1)$ and $B \sim \nm(0,1)$ are adopted and the true likelihood is replaced by a pseudo-likelihood formed from the product of the conditional distributions of each node given its neighbors. Each conditional has a simple logistic form and does not involve the global normalizing constant. The resulting (pseudo) posterior probabilities of the claims $C_\gamma : \beta > \gamma$ are also displayed in Figure~\ref{fig:ising3}(c). They were approximated using MCMC, rather than the variational Bayes approximation considered in \citealp{kim2024}. Strong posterior support is found for the more ambitious claim $\beta > 0.19$, or equivalently $\text{OR($z_i^{-1 \to 1} \mid z_j^{-1 \to 1}$)} > 2.1$. However, caution is required, since these posterior probabilities do not carry calibration guarantees.

\begin{figure}[t]
\begin{center}
\subfigure[]{\scalebox{0.37}{\includegraphics{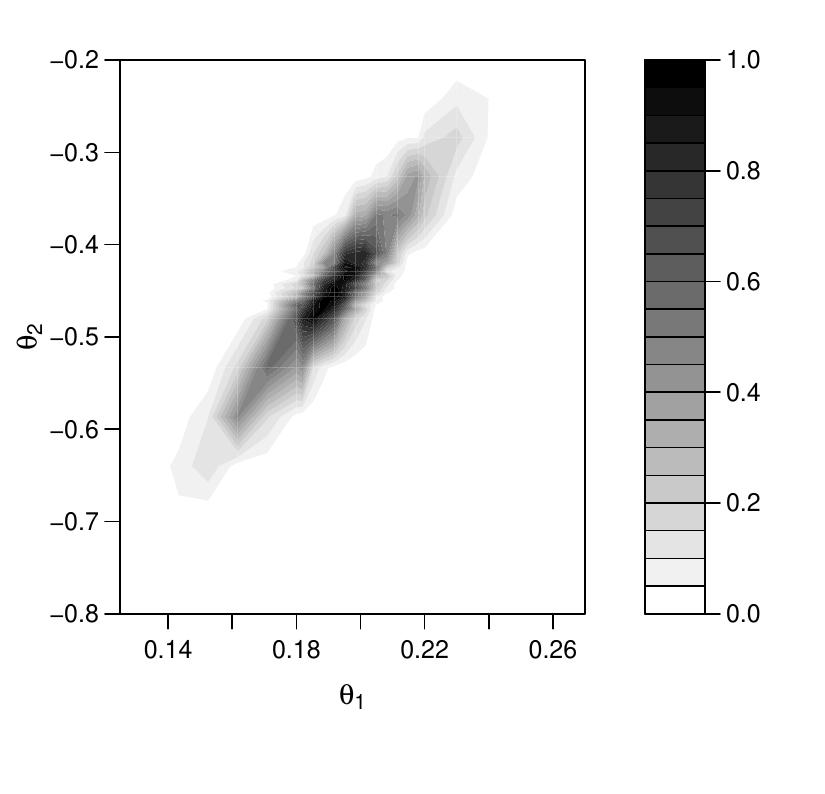}}}
\subfigure[]{\scalebox{0.37}{\includegraphics{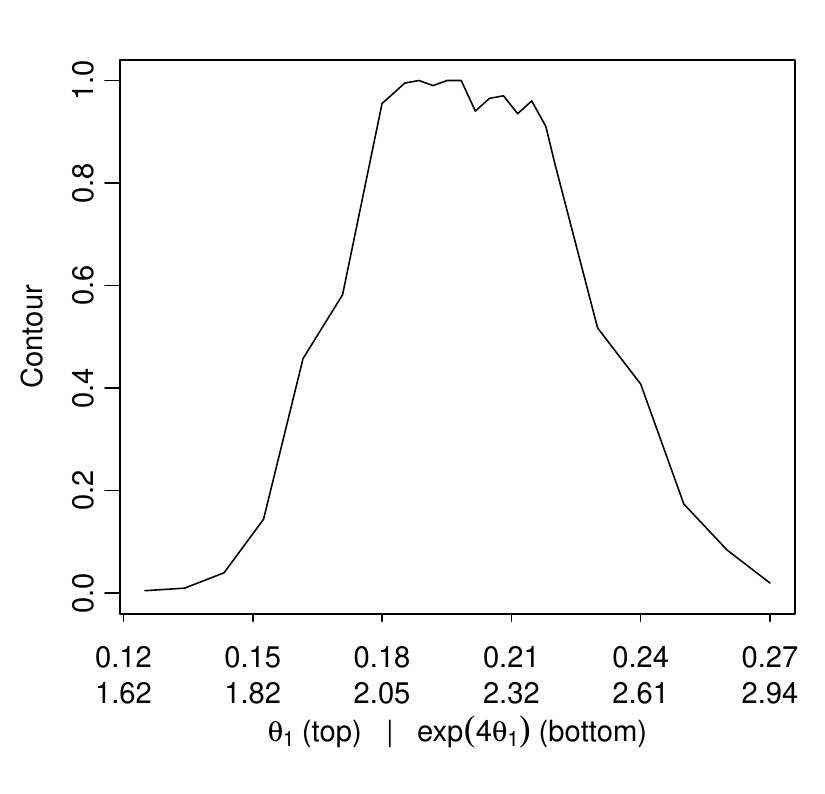}}}
\subfigure[]{\scalebox{0.37}{\includegraphics{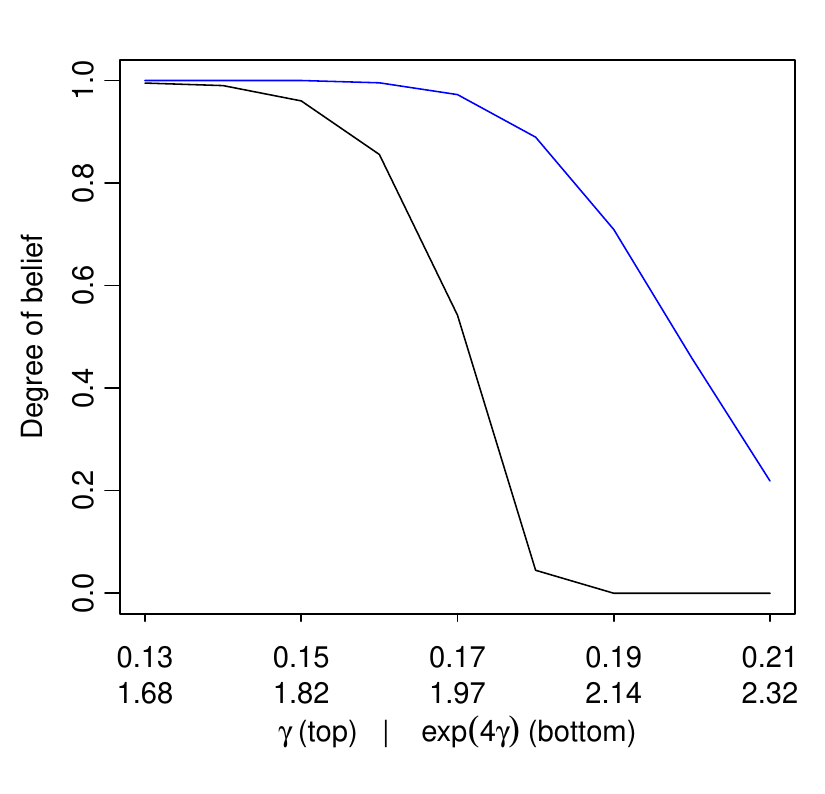}}}
\end{center}
\caption{{Panel~(a): Likelihood-free IM contour for $(\Theta_1,\Theta_2) = (\beta,B)$ based on the measles dataset. Panel~(b): Marginal IM contour for $\beta$. Panel~(c): Degrees of belief to the claims $C_\gamma : \beta > \gamma$, for $\gamma > 0$ based on the likelihood-free IM (black) and on a Bayesian analysis based on a pseudo-likelihood (blue).}}
\label{fig:ising3}
\end{figure}

\section{Conclusion}

The proposed likelihood-free IM contributes both to broad statistical methodology and to the specific development of the IM framework.
As a general statistical contribution, it provides a principled and computationally simple method for inference that bridges the gap between traditional likelihood-free frequentist and Bayesian approaches. In particular, the method admits two complementary interpretations: as a frequentist approach capable of producing finite-sample calibrated degrees of belief, and as a prior-free probabilistic approach capable of generating inferential procedures with finite-sample error rate control guarantees. From the perspective of the IM framework, the proposed construction preserves the core ranking–validification blueprint but, crucially, introduces randomness in the ranking step. Because of this, for the first time, Monte Carlo error is accounted for in the IM construction and the calibration properties of our IM inference hold exactly; this stands in contrast to the traditional IM construction in \eqref{eq:StandIMsMC} for which calibration guarantees only hold asymptotically as $L \to \infty$. This development is especially important in settings where the likelihood is expensive to compute because it is precisely in these settings where the size of $L$ is limited by computational resources. In this sense, our simulation-based, likelihood-free IM gives further momentum to what \citet{cui.hannig.im} described as ``one of the original statistical innovations of the 2010s,'' and suggests that the IM framework remains a fertile ground for developing original solutions to modern inference problems.

\bibliographystyle{apalike}
\bibliography{lit}

\end{document}